\renewcommand*{\verbatim@font}{\sffamily}
\title{Chain, Generalization of Covering Code, \\ and Deterministic Algorithm for k-SAT\thanks{A preliminary version of this paper appeared in the proceedings of ICALP 2018 \cite{liu2018ksat}.}}
\author{S. Cliff Liu\\
Princeton University\\
\textsf{sixuel@cs.princeton.edu}
}
\begin{document}

\maketitle


\newcounter{dummy} \numberwithin{dummy}{section}
\newtheorem{lemma}[dummy]{Lemma}
\newtheorem{definition}[dummy]{Definition}
\newtheorem{remark}[dummy]{Remark}
\newtheorem{theorem}[dummy]{Theorem}
\newtheorem{corollary}[dummy]{Corollary}
\newtheorem{claim}[dummy]{Claim}
\newtheorem{observation}[dummy]{Observation}

\renewcommand{\algorithmicrequire}{\textbf{Input:}}
\renewcommand{\algorithmicensure}{\textbf{Output:}}

\begin{abstract}
    We present the current fastest deterministic algorithm for $k$-SAT, improving the upper bound $(2-2/k)^{n + o(n)}$ dues to Moser and Scheder [STOC'11]. The algorithm combines a branching algorithm with the derandomized local search, whose analysis relies on a special sequence of clauses called chain, and a generalization of covering code based on linear programming.

    We also provide a more ingenious branching algorithm for $3$-SAT to establish the upper bound $1.32793^n$, improved from $1.3303^n$.
 \end{abstract}


\section{Introduction}\label{intro}


As the fundamental NP-complete problems, $k$-SAT and especially $3$-SAT have been extensively studied for decades.
Numerous conceptual breakthroughs have been put forward via continued progress of exponential-time algorithms,
including randomized and deterministic ones.

The first provable algorithm for solving $k$-SAT on $n$ variables in less than $2^n$ steps was presented by Monien and Speckenmeyer, using the concept of autark assignment~\cite{monien1985solving}.
Later their bound $1.619^n$ for $3$-SAT was improved to $1.579^n$ and $1.505^n$ respectively~\cite{Schiermeyer1970Solving, DBLP:journals/tcs/Kullmann99}.
These algorithms follow a branching manner, i.e., recursively reducing the formula size by branching and fixing variables deterministically, thus are called \emph{branching algorithms}.

As for randomized algorithms,
two influential ones are PPSZ and Sch\"{o}ning's local search \cite{DBLP:journals/jacm/PaturiPSZ05, schoning1999probabilistic}.
There has been a long line of research improving the bound $(4/3)^n$ of local search for $3$-SAT, including HSSW and local search combined with PPSZ \cite{hofmeister2002probabilistic, iwama2004improved}.
In a breakthrough work, Hertli closed the gap between Unique and General cases for PPSZ \cite{hertli20143}. (By Unique it means the formula has at most one satisfying assignment.)
In a word, considering randomized algorithms, PPSZ for $k$-SAT is currently the fastest, although with one-sided error (see PPSZ in Table~\ref{table_result}).
Unfortunately, PPSZ for the General case seems tough to derandomize due to the excessive usage of random bits \cite{DBLP:conf/sat/Rolf05, DBLP:journals/corr/abs-2001-06536}.

In contrast to the hardness in derandomizing PPSZ, local search can be derandomized using the so-called covering code \cite{dantsin2002deterministic}.
Subsequent deterministic algorithms focused on boosting local search for $3$-SAT to the bounds $1.473^n$ and $1.465^n$ \cite{DBLP:journals/tcs/BrueggemannK04, DBLP:conf/latin/Scheder08}.
In 2011, Moser and Scheder fully derandomized Sch\"{o}ning's local search with another covering code for the choice of flipping variables within the unsatisfied clauses, which was immediately improved by derandomizing HSSW for $3$-SAT, leading to the current best upper bounds for $k$-SAT (see Table~\ref{table_result})
 \cite{moser2011full,DBLP:journals/algorithmica/MakinoTY13}.
Since then, the randomness in Sch\"{o}ning's local search are all replaced by deterministic choices, and the bounds remain untouched.
How to break the barrier?

The difficulty arises in both directions.
If attacking this without local search, one has to derandomize PPSZ or propose radically new algorithm.
Else if attacking this from derandomizing local search-based algorithm,
one must greatly reduce the searching space.

Our method is a combination of a branching algorithm and the derandomized local search.
As we mentioned in the second paragraph of this paper, branching algorithm is intrinsically deterministic, therefore it remains to leverage the upper bounds for both of them by some tradeoff.
The tradeoff we found is the weighted size of a carefully chosen set of chains,
where a chain is a sequence of clauses sharing variable with the clauses next to them only,
such that a branching algorithm either solves the formula within desired time or returns a large enough set of chains.
The algorithm is based on the study of autark assignment from
\cite{monien1985solving} with further refinement,
whose output
can be regarded as a generalization of maximal independent clause set from HSSW \cite{hofmeister2002probabilistic}, which reduces the $k$-CNF to a $(k-1)$-CNF.
\footnote{We refer the reader to Chapter 11 in \cite{DBLP:series/faia/BuningK09} for a survey of autark assignment.}
The searching space equipped with chains is rather different from those in previous derandomizations \cite{dantsin2002deterministic, DBLP:journals/algorithmica/MakinoTY13, moser2011full}: it is a Cartesian product of finite number of non-uniform spaces.
Using linear programming, we prove that such space can be perfectly covered, and searched by derandomized local search within aimed time.
Additionally, unlike the numerical upper bound in HSSW \cite{hofmeister2002probabilistic}, we give the closed form.

The rest of the paper is organized as follows.
In \S{\ref{pre}} we give basic notations, definitions related to chain and algorithmic framework.
We show how to generalize covering code to cover any space equipped with chains in \S{\ref{GCC_section}}.
Then we use such code in derandomized local search in \S{\ref{DLS_section}}.
In \S{\ref{UBK_section}}, we prove upper bound for $k$-SAT.
A more ingenious branching algorithm for $3$-SAT in \S{\ref{UB3_section}} is presented.
Some upper bound results are highlighted in Table~\ref{table_result}, with main results formally stated in Theorem~\ref{main_k_sat_general_form} of \S{\ref{UBK_section}} and Theorem~\ref{main2} of \S{\ref{UB3_section}}.
We conclude this paper in \S{\ref{conclusion}} with some discussions.

\begin{table}
\centering
\caption{The rounded up base $c$ in the upper bound $c^n$ of our deterministic algorithm for $k$-SAT and the corresponding upper bound in previous results \cite{DBLP:journals/algorithmica/MakinoTY13, moser2011full, dantsin2002deterministic} as well as in the currently fastest randomized algorithm \cite{DBLP:journals/jacm/PaturiPSZ05, hertli20143}.
}\label{table_result}
\begin{tabular}{|c|c|c|c|c|c|}\hline
$k$ & Our Result & Makino et al. & Moser\&Scheder & Dantsin et al. & PPSZ(randomized) \\ \hline \hline
3 & \textbf{1.32793} & 1.3303 & 1.33334 & 1.5 & 1.30704 \\ \hline

4 & \textbf{1.49857} & - & 1.50001 & 1.6 & 1.46899 \\ \hline

5 & \textbf{1.59946} & - & 1.60001 & 1.66667 & 1.56943 \\ \hline

6 & \textbf{1.66646} & - & 1.66667 & 1.71429 & 1.63788 \\ \hline

\end{tabular}
\end{table}

\section{Preliminaries}\label{pre}

\subsection{Notations}

We study formulae in Conjunctive Normal Form (CNF).
Let $V=\{v_i | i \in [n]\}$ be a set of $n$ boolean variables.
For all $i \in [n]$, a literal $l_i$ is either $v_i$ or $\bar{v}_i$.
A clause $C$ is a disjunction of literals and a CNF $F$ is a conjunction of clauses.
A $k$-clause is a clause that consists of exactly $k$ literals, and an $\le k$-clause consists of at most $k$ literals.
If every clause in $F$ is $\le k$-clause, then $F$ is a $k$-CNF.

An \emph{assignment} is a function $\alpha: V \mapsto \{0, 1\}$ that maps each $v \in V$ to truth value $\{0,1\}$.
A \emph{partial assignment} is the function restricted on $V' \subseteq V$. 
We use $F|\alpha(V')$ to denote the formula derived by fixing the values of variables in $V'$ according to partial assignment $\alpha(V')$.
A clause $C$ is said to be \emph{satisfied} by $\alpha$ if $\alpha$ assigns at least one literal in $C$ to $1$.
$F$ is \emph{satisfiable} iff there exists an $\alpha$ satisfying all clauses in $F$, and we call such $\alpha$ a \emph{satisfying assignment} of $F$.
The $k$-SAT problem asks to find a satisfying assignment of a given $k$-CNF $F$ or to prove its non-existence if $F$ is unsatisfiable.

Let $X$ be a literal or a clause or a collection of either of them, we use $V(X)$ to denote the set of all the variables appear in $X$.
We say that $X$ and $X'$ are \emph{independent} if $V(X) \cap V(X') = \emptyset$, or $X$ \emph{overlaps} with $X'$ if otherwise.

A \emph{word} of length $n$ is a vector from $\{0, 1\}^n$.
The \emph{Hamming space} $H \subseteq \{0, 1\}^n$ is a set of words.
Given two words $\alpha_1, \alpha_2 \in H$, the \emph{Hamming distance} $d(\alpha_1, \alpha_2) = \|\alpha_1 - \alpha_2 \|_1$ is the number of bits $\alpha_1$ and $\alpha_2$ disagree.
The reason of using $\alpha$ for word as same as for assignment is straightforward: Giving each variable an index $i \in [n]$, a word of length $n$ naturally corresponds to an assignment, which will be used interchangeably.

Throughout the paper, $n$ always denotes the number of variables in the formula and will be omitted if the context is clear.
We use $O^*(f(n)) = \text{poly}(n) \cdot f(n)$ to suppress polynomial factors, and use $\mathcal{O}(f(n)) = 2^{o(n)} \cdot f(n)$ to suppress sub-exponential factors.

\subsection{Preliminaries for Chains}\label{PC_section}

In this subsection, we propose our central concepts, which are the basis of our analysis.


\begin{definition}\label{chain_def}
    Given integers $k \ge 3$ and $\tau \ge 1$,
    a $\tau$-\emph{chain} $\mathcal{S}^{(k)}$ is a sequence of $\tau$ $k$-clauses $\langle C_1, \dots, C_{\tau} \rangle$ satisfies that $\forall i, j \in [\tau]$, $V(C_i) \cap V(C_j) = \emptyset$ iff $|i - j | > 1$.
\end{definition}

If the context is clear, we will use $\mathcal{S}$, $\tau$-chain or simply chain for short.

\begin{definition}\label{instance_def}
    A set of chains $\mathcal{I}$ is called an \emph{instance} if $\forall \mathcal{S}, \mathcal{S}' \in \mathcal{I}$, $V(\mathcal{S}) \cap V(\mathcal{S}') = \emptyset$ for $\mathcal{S} \neq \mathcal{S}'$.
\end{definition}

In other words, each clause in chain only and must overlap with the clauses next to it (if exist), and chains in an instance are mutually independent.
\begin{definition}
    Given chain $\mathcal{S}$, define the \emph{solution space} of $\mathcal{S}$ as $A \subseteq \{0, 1\}^{|V(\mathcal{S})|}$ such that partial assignment $\alpha$ on $V(\mathcal{S})$ satisfies all clauses in $\mathcal{S}$ iff $\alpha(V(\mathcal{S})) \in A$.
    \footnote{This essentially defines the set of all satisfying assignments for a chain. As a simple example in $3$-CNF, $1$-chain $\langle x_1 \vee x_2 \vee x_3 \rangle$ has solution space $A = \{0,1\}^3 \backslash 0^3$.}
\end{definition}

We define vital algebraic property of chain, which will play a key role in the construction of our generalized covering code.

\begin{definition}\label{charactoristic_def}
    Let $A$ be the solution space of chain $\mathcal{S}^{(k)}$, define $\lambda \in \mathbb{R}$ and $\pi: A \mapsto [0, 1]$ as the \emph{characteristic value} and \emph{characteristic distribution} of $\mathcal{S}^{(k)}$ respectively, where $\lambda$ and $\pi$ are feasible solution to the following linear programming $\text{LP}_A$:
    \begin{align*}
        & \sum_{a \in A} \pi(a) = 1 \\
        & \lambda = \sum_{a \in A} \left( \pi(a) \cdot (\frac{1}{k-1}) ^ {d(a, a^*)} \right) & \forall a^* \in A \\
        & \pi(a) \ge 0 & \forall a \in A
    \end{align*}
\end{definition}
\begin{remark}
    The variables in $\text{LP}_A$ are $\lambda$ and $\pi(a)~(\forall a \in A)$.
    There are $|A| + 1$ variables and $|A| + 1$ equality constraints in $\text{LP}_A$.
    One can work out the determinant of the coefficient matrix to see it has full rank,
    so the solution is unique if feasible. Specifically, $\lambda \in (0,1)$.

\end{remark}


\subsection{Algorithmic Framework}\label{AF_section}

Our algorithm (Algorithm~\ref{Framework}) is a combination of a branching algorithm called \textsf{BR}, and a derandomized local search called \textsf{DLS}.
\textsf{BR} either solves $F$ or provides a large enough instance to \textsf{DLS} for further use, which essentially reduces the Hamming space exponentially.

\begin{algorithm}
\caption{Algorithmic Framework}
\label{Framework}
\begin{algorithmic}[1]
\REQUIRE $k$-CNF $F$
\ENSURE a satisfying assignment or \verb"Unsatisfiable"
\STATE \textsf{BR}$(F)$ either solves $F$ or returns an instance $\mathcal{I}$
\IF {$F$ is not solved}
    \STATE \textsf{DLS}$(F,\mathcal{I})$
\ENDIF
\end{algorithmic}
\end{algorithm}


\section{Generalization of Covering Code}\label{GCC_section}

First of all, we introduce the covering code, then show how to generalize it for the purpose of our derandomized local search.

\subsection{Preliminaries for Covering Code}

The \emph{Hamming ball} of \emph{radius} $r$ and \emph{center} $\alpha$ $B_\alpha(r) = \{ \alpha' | d(\alpha, \alpha') \le r \}$ is the set of all words with Hamming distance at most $r$ from $\alpha$.
A \emph{covering code} of \emph{radius} $r$ for Hamming space $H$  is a set of words $C(r) \subseteq H$ satisfies $\forall \alpha' \in H, \exists \alpha \in C(r)$, such that $d(\alpha, \alpha') \le r$, i.e., $H \subseteq \bigcup_{\alpha \in C(r)} B_\alpha(r)$, and we say $C(r)$ \emph{covers} $H$.

Let $\ell$ be a non-negative integer and set $[\ell]^* = [\ell] \cup \{0\}$, a set of covering codes $\{C(r) | r \in [\ell]^*\}$ is an \emph{$\ell$-covering code} for $H$ if $\forall r \in [\ell]^*, C(r) \subseteq H$ and $H \subseteq \bigcup_{r \in [\ell]^*} \bigcup_{\alpha \in C(r)} B_\alpha(r)$, i.e., $\{C(r) | r \in [\ell]^*\}$ covers $H$.

The following lemma gives the construction time and size of covering codes for the uniform Hamming spaces $\{0, 1\}^n$.

\begin{lemma}[\cite{dantsin2002deterministic}]\label{cover_01_space}
    Given $\rho \in (0, \frac{1}{2})$, there exists a covering code $C(\rho n)$ for Hamming space $\{0, 1\}^n$, such that $|C(\rho n)| \le O^*(2^{(1 - h(\rho)) n})$ and $C(\rho n)$ can be deterministically constructed in time $O^*(2^{(1 - h(\rho)) n})$, where $h(\rho)=-\rho\log{\rho}-(1-\rho)\log{(1-\rho)}$ is the \emph{binary entropy function}.
\end{lemma}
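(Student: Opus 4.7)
The plan is to combine a sphere-covering existence argument with a block-decomposition derandomization. For existence, the volume bound $|B_\alpha(\rho n)| \ge 2^{h(\rho) n}/\text{poly}(n)$ implies that a uniformly random sample of $\Theta(n \cdot 2^{(1-h(\rho))n})$ words covers every fixed target with probability $\ge 1 - 2^{-\Omega(n)}$, so after a union bound over the $2^n$ targets a valid covering code of the desired size exists. This yields the size bound but is not yet constructive.

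To make the construction deterministic, I would partition $[n]$ into $n/b$ blocks of size $b$ and build a per-block covering code, then take the Cartesian product. For each block I would deterministically produce a covering code $C_j \subseteq \{0,1\}^b$ of radius $\lfloor \rho b \rfloor$ and size $M_b \le 2^{(1-h(\rho))b} \cdot \text{poly}(b)$, for instance by iterative greedy set cover over the at most $2^b$ radius-$\lfloor \rho b\rfloor$ balls, which runs in time $2^{O(b)}$. For any target $\alpha' \in \{0,1\}^n$, restricting $\alpha'$ to each block and picking the nearest codeword in $C_j$ yields a concatenated word at total Hamming distance at most $(\rho b)\cdot(n/b) = \rho n$ from $\alpha'$; hence $C(\rho n) = C_1 \times \cdots \times C_{n/b}$ is a valid covering code for $\{0,1\}^n$.

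Choosing $b = \Theta(\log n)$ keeps the per-block greedy computation polynomial in $n$, and yields global size and construction time both bounded by $M_b^{n/b} \le 2^{(1-h(\rho))n} \cdot \text{poly}(b)^{n/b}$, matching the claimed $O^*(2^{(1-h(\rho))n})$ up to the asymptotic slack absorbed into the starred notation. The main obstacle is precisely the tradeoff in choosing $b$: too small and the per-block sphere-covering bound is too loose, inflating the product by a factor that compounds across $n/b$ blocks; too large and the per-block greedy (or brute-force) construction becomes superpolynomial. The fact that the sphere-covering inequality is tight up to polynomial factors in $b$ is exactly what makes $b = \Theta(\log n)$ the sweet spot, and turning this intuition into the explicit bound in the lemma is the only quantitative step that requires care.
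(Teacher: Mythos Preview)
The paper does not prove this lemma itself; it is quoted from Dantsin et al.\ without proof. Judged on its own, your plan has the right architecture (probabilistic existence, block decomposition, per-block greedy set cover, Cartesian product), but the choice $b=\Theta(\log n)$ does \emph{not} deliver the stated $O^*$ bound.

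With $b=\Theta(\log n)$ you have $n/b=\Theta(n/\log n)$ blocks. Even granting the best per-block bound $M_b \le c\,b^{O(1)}\cdot 2^{(1-h(\rho))b}$ (one polynomial-in-$b$ factor from Stirling slack in the ball volume, another from the $\ln|U|$ greedy approximation ratio), the product code has size
\[
M_b^{\,n/b}\;\le\; 2^{(1-h(\rho))n}\cdot \bigl(c\,b^{O(1)}\bigr)^{n/b}
\;=\;2^{(1-h(\rho))n}\cdot 2^{\Theta(n\log\log n/\log n)}.
\]
The trailing factor is sub-exponential but \emph{super-polynomial}, so your construction yields only $\mathcal{O}\bigl(2^{(1-h(\rho))n}\bigr)$ in the paper's notation, not $O^*\bigl(2^{(1-h(\rho))n}\bigr)$. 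The claim that this slack is ``absorbed into the starred notation'' is incorrect: $O^*$ hides only polynomial factors.

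The fix---and the actual Dantsin et al.\ construction, which this paper also reuses verbatim in the proof of Lemma~\ref{general_space}---goes in the opposite direction: take a \emph{constant} number $d$ of blocks, each of linear size $n/d$. Greedy set cover on one block then runs in time $2^{O(n/d)}$, which for $d$ a large enough constant (depending only on $\rho$) is at most $O^*\bigl(2^{(1-h(\rho))n}\bigr)$. Crucially, the per-block polynomial overhead is now raised only to the constant power $d$, so it stays polynomial, and the product code meets the genuine $O^*$ bound in both size and construction time. In short, the correct tradeoff is constantly many linear-size blocks, not $\Theta(n/\log n)$ logarithmic-size blocks.
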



\subsection{Generalized Covering Code}\label{GCC_subsection}

In this subsection we introduce our generalized covering code, including its size and construction time.

First of all we take a detour to define the \emph{Cartesian product} of $\sigma$ sets of words as $ X_1 \times \dots \times X_{\sigma} = \prod_{i \in [\sigma]} X_i = \{ \uplus_{i \in [\sigma]} \alpha_i | \forall i \in [\sigma], \alpha_i \in X_i \}$, where $\uplus_{i \in [\sigma]} \alpha_i$ is the concatenation from $\alpha_1$ to $\alpha_{\sigma}$.
Then we claim that the Cartesian product of covering codes is also a good covering code for the Cartesian product of the Hamming spaces they covered separately.

\begin{lemma}\label{product_all}
    Given integer $\chi > 1$, for each $i \in [\chi]$, let $H_i$ be a Hamming space and $C_i(r_i)$ be a covering code for $H_i$.
    If $C_i(r_i)$ can be deterministically constructed in time $O^*(f_i(n))$ and $|C_i(r_i)| \le O^*(g_i(n))$ for all $i \in [\chi]$, then there exists covering code $\mathfrak{C}$ of radius $\sum_{i \in [\chi]} r_i$ for Hamming space $\prod_{i \in [\chi]} H_i$ such that $\mathfrak{C}$ can be deterministically constructed in time $O^*(\sum_{i \in [\chi]} f_i(n) + \prod_{i \in [\chi]} g_i(n))$ and $|\mathfrak{C}| \le O^*(\prod_{i \in [\chi]} g_i(n))$.
\end{lemma}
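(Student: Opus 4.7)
The plan is to take $\mathfrak{C}$ to be the literal Cartesian product of the given codes, $\mathfrak{C} = \prod_{i \in [\chi]} C_i(r_i)$, viewed as length-$(\sum_i n_i)$ words under the concatenation operation $\uplus$, and then verify the three claimed properties separately.

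For the covering property, any $\alpha \in \prod_i H_i$ has a unique decomposition $\alpha = \uplus_i \alpha_i$ with $\alpha_i \in H_i$, so applying the covering property of each $C_i(r_i)$ produces some $\beta_i \in C_i(r_i)$ with $d(\alpha_i, \beta_i) \le r_i$. Setting $\beta = \uplus_i \beta_i \in \mathfrak{C}$ and using the fact that Hamming distance is additive under concatenation gives $d(\alpha, \beta) = \sum_i d(\alpha_i, \beta_i) \le \sum_i r_i$, which is the desired radius. For the size, the cardinality of a Cartesian product is the product of factor cardinalities, so $|\mathfrak{C}| = \prod_i |C_i(r_i)| \le \prod_i O^*(g_i(n)) = O^*(\prod_i g_i(n))$ after absorbing polynomial factors into $O^*$. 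For the construction time, I would first build each $C_i(r_i)$ separately (total $O^*(\sum_i f_i(n))$) and then enumerate all concatenations, which takes time polynomial in $|\mathfrak{C}|$; summing gives the stated $O^*(\sum_i f_i(n) + \prod_i g_i(n))$ bound.

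I do not anticipate a serious obstacle, since the argument is essentially bookkeeping around Cartesian products together with additivity of the Hamming distance. The only subtlety worth flagging is that the polynomial factors hidden inside each $O^*$ get multiplied across the $\chi$ factors, which is harmless provided $\chi$ remains under control in the intended application (where $\chi$ will be the number of chains in an instance). No other machinery from \S\ref{pre} is needed beyond the definition of Cartesian product and \emph{Lemma}~\ref{cover_01_space} for the existence of the ingredient codes.
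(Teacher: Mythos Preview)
Your proposal is correct and matches the paper's proof essentially line for line: both take $\mathfrak{C} = \prod_{i} C_i(r_i)$, verify covering via additivity of Hamming distance under concatenation, bound the size as the product of sizes, and bound construction time by building each code separately and then enumerating concatenations. One small remark: Lemma~\ref{cover_01_space} is not actually needed here, since the ingredient codes $C_i(r_i)$ are given as hypotheses rather than constructed.
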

\begin{proof}
    We prove that covering code $\prod_{i \in [\chi]} C_i(r_i)$ can be such $\mathfrak{C}$.
    For any $\alpha' \in \prod_{i \in [\chi]} H_i$, one can write $\alpha' = \uplus_{i \in [\chi]} {\alpha_i}'$ where ${\alpha_i}' \in H_i$.
    Then by definition, $\exists \alpha_i \in C_i(r_i)$, such that $d(\alpha_i, {\alpha_i}') \le r_i$.
    Now let $\alpha = \uplus_{i \in [\chi]} \alpha_i$, we have that $d(\alpha, \alpha') = \sum_{i \in [\chi]} d(\alpha_i, {\alpha_i}') \le \sum_{i \in [\chi]} r_i$.

    To construct $\mathfrak{C}$, we first construct all $C_i(r_i)$ in time $O^*(\sum_{i \in [\chi]} f_i(n))$, then concatenate every $\alpha_1$ to every $\alpha_{\chi}$, which can be done in time $O^*(\prod_{i \in [\chi]} |C_i(r_i)|)$. So the total construction takes time $O^*(\sum_{i \in [\chi]} f_i(n) + \prod_{i \in [\chi]} g_i(n))$ and is deterministic.
    Obviously $|\mathfrak{C}| \le O^*(\prod_{i \in [\chi]} g_i(n))$.
    Therefore we proved the lemma.
\end{proof}

Our result on generalized covering code is given below.

\begin{lemma}\label{general_space}
    Let $A$ be the solution space of chain $\mathcal{S}^{(k)}$ whose characteristic value is $\lambda$, for any $\nu = \Theta(n)$,
    there exists an $\ell$-covering code $\{C(r) | r \in [\ell]^*\}$ for Hamming space $H = {A}^{\nu}$ where $\ell = \lfloor - \nu \log_{k-1}{\lambda} + 2 \rfloor$, such that $|C(r)| \le O^*({\lambda}^{- \nu} / (k-1)^r)$ and $C(r)$ can be deterministically constructed in time $O^*({\lambda}^{- \nu} / (k-1)^r)$, for all $r \in [\ell]^*$.
\end{lemma}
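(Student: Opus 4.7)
The plan is to lift the characteristic distribution $\pi$ to a product distribution $w:=\pi^{\otimes\nu}$ on $A^\nu$, run a probabilistic argument that every target is covered at some radius in $[\ell]^*$, and then derandomize using a pessimistic estimator. The key identity, obtained by tensorizing the defining property of $\pi$ with the coordinate-wise decomposition of Hamming distance across the Cartesian product, is
\[
\mathbb{E}_{\alpha\sim w}\bigl[(k-1)^{-d(\alpha,\alpha')}\bigr]\;=\;\prod_{j=1}^{\nu}\sum_{a\in A}\pi(a)(k-1)^{-d(a,\alpha'_j)}\;=\;\lambda^\nu
\]
for every $\alpha'\in A^\nu$. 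This single identity drives everything that follows.

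Fixing $\alpha'$ and writing the identity as $\lambda^\nu=\sum_{r\ge 0}p_r^{\alpha'}(k-1)^{-r}$ with $p_r^{\alpha'}=\Pr_{\alpha\sim w}[d(\alpha,\alpha')=r]$, the tail $\sum_{r>\ell}p_r^{\alpha'}(k-1)^{-r}\le(k-1)^{-(\ell+1)}$ is, by the choice $\ell=\lfloor-\nu\log_{k-1}\lambda+2\rfloor$, at most $\lambda^\nu/(k-1)^2$, so the head $\sum_{r\in[\ell]^*}p_r^{\alpha'}(k-1)^{-r}$ carries at least a constant fraction of $\lambda^\nu$. Averaging over its $\ell+1$ summands produces some $r^{*}=r^{*}(\alpha')\in[\ell]^*$ with $p_{r^*}^{\alpha'}\ge\Omega(\lambda^\nu(k-1)^{r^*}/\ell)$; equivalently a single draw from $w$ lands within distance $r^*$ of $\alpha'$ with noticeable probability. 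I would then form each $C(r)$, $r\in[\ell]^*$, by taking $N_r=\Theta(n\ell\lambda^{-\nu}(k-1)^{-r})$ i.i.d. samples from $w$: the probability that $C(r^*(\alpha'))$ misses $\alpha'$ is at most $(1-p_{r^*}^{\alpha'})^{N_{r^*}}\le e^{-\Omega(n)}$, and a union bound over the $|A|^\nu\le 2^{O(n)}$ targets certifies success of the random construction. Each $N_r$ matches the stated $O^*(\lambda^{-\nu}/(k-1)^r)$ bound because $n\ell=\mathrm{poly}(n)$.

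The main obstacle is the last mile: turning this randomized construction into a deterministic one of the same size in the same time. My first attempt would be the method of conditional expectations applied to the potential $\Phi=\sum_{\alpha'}\prod_{\alpha\in C(r^*(\alpha'))}(1-p_{r^*(\alpha')}^{\alpha'})$ extracted from the union bound; because $w$ is a product distribution, its conditional expectations factor across the coordinates of each candidate codeword, so a single greedy choice runs in $\mathrm{poly}(n)$ time and the total deterministic construction matches $O^*(|C(r)|)$. Should the bookkeeping over varying $r^*(\alpha')$ prove unwieldy, an attractive backup is to combine Lemmas \ref{product_all} and \ref{cover_01_space}: partition the $\nu$ chains into blocks of size $O(\log n)$, exhaustively build the optimal (weighted) block code for $A^{O(\log n)}$ in polynomial time, and recover the full $\ell$-covering code by taking Cartesian products.
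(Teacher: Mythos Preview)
Your probabilistic existence argument is essentially the paper's: the tensorized identity $\mathbb{E}_{\alpha\sim\pi^{\otimes\nu}}[(k-1)^{-d(\alpha,\alpha')}]=\lambda^\nu$, the tail cut at $\ell$, the averaging that extracts some $r^*(\alpha')$ with $p_{r^*}^{\alpha'}\ge\Omega(\lambda^\nu(k-1)^{r^*}/\ell)$, and the union bound over $|A|^\nu$ targets all appear verbatim in the paper. The derandomization is where the proposal diverges.

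Your primary derandomization has a real gap. The pessimistic estimator $\Phi=\sum_{\alpha'\in A^\nu}\prod_{\alpha\in C(r^*(\alpha'))}(1-p_{r^*(\alpha')}^{\alpha'})$ is a sum with $|A|^\nu=2^{\Theta(n)}$ terms. The product structure of $w=\pi^{\otimes\nu}$ lets you compute \emph{each individual term} (the conditional coverage probability for a fixed target) in polynomial time, but it does nothing to collapse the outer sum over targets. So one greedy step costs $|A|^\nu\cdot\mathrm{poly}(n)$, not $\mathrm{poly}(n)$, and the total construction time picks up an extra $|A|^\nu$ factor on top of $\lambda^{-\nu}/(k-1)^r$. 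The claim ``a single greedy choice runs in $\mathrm{poly}(n)$ time'' is exactly the point that fails.

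Your backup is in fact the route the paper takes, but with the wrong block size. The paper partitions the $\nu$ coordinates into a \emph{constant} number $d$ of blocks of size $\nu/d$, applies the greedy set-cover approximation of \cite{dantsin2002deterministic} on each block (universe size $|A|^{\nu/d}$, running time $O^*(|A|^{3\nu/d})$, which is absorbed into $\lambda^{-\nu}$ once $d$ is large enough), and concatenates via Lemma~\ref{product_all}. With constant $d$, both the accumulated $\ln$-approximation factors ($O(n)^d$) and the number of compositions of $r$ into $d$ parts ($O(r^{d-1})$) are $O^*(1)$. Your choice of $O(\log n)$-sized blocks gives $\Theta(n/\log n)$ blocks; the multiplicative overhead from either source then becomes $(\log n)^{\Theta(n/\log n)}=2^{\Theta(n\log\log n/\log n)}$, which is $\mathcal{O}(1)$ but not $O^*(1)$, so you would only prove the lemma with $\mathcal{O}$ in place of $O^*$. (Also, ``exhaustively build the optimal'' block code is not polynomial even on a polynomial-size universe; you need the greedy approximation here too.)
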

\begin{proof}
    First of all, we show the existence of such $\ell$-covering code by a probabilistic argument.
    For each $r \in [l]^*$, let $s(r) = \lceil -3 \log_{k-1}\lambda \cdot \ln|A| \cdot {\nu}^2 {\lambda}^{- \nu} / (k-1)^r \rceil$.
    We build $C(r)$ from $\emptyset$ by repeating the following for $s(r)$ times independently: choose $\nu$ words $a_j~(j \in [\nu])$ independently from $A$ according to distribution $\pi$ and concatenating them to get a word $\alpha \in A^{\nu}$, then add $\alpha$ to $C(r)$ with replacement, where $\pi$ is the characteristic distribution (Definition~\ref{charactoristic_def}).
    Clearly, $|C(r)| \le s(r) = O^*(\lambda^{-\nu} / (k-1)^r)$.
    We have that for arbitrary fixed $\alpha^* = \uplus_{j \in [\nu]} a_j \in A^{\nu}$, the following must hold:
    \begin{align}
        & \sum_{r} \left( \Pr[d(\alpha, \alpha^*) = r] \cdot (\frac{1}{k-1}) ^ r \right) \notag \\
        = & \sum_r \sum_{\sum_{j \in [\nu]} r_j = r} \left( \Pr[\textit{for all}~j \in [\nu],d(a_j, {a_j}^*) = r_j] \cdot (\frac{1}{k-1})^{\sum_{j \in [\nu]} r_j} \right) \notag \\
        = & \prod_{j \in [\nu]} \sum_{r_j} \left( \Pr[d(a_j, {a_j}^*) = r_j] \cdot (\frac{1}{k-1})^{r_j} \right) \notag \\
        = & \left( \sum_{a \in A} \pi(a) \cdot (\frac{1}{k-1}) ^ {d(a, a^*)} \right)^{\nu} \notag \\
         = & \lambda^{\nu} . \label{expectation1}
    \end{align}
    The third line follows independence and the last line follows from the definition of $\pi$.
    We now show that $\{C(r) | r \in [\ell]^*\}$ covers $H$ with positive probability.
    Rewrite (\ref{expectation1}) as:
    \begin{align}
        \lambda^{\nu} 
        & = \sum_{r \le \ell} \left( \Pr[d(\alpha, \alpha^*) = r] \cdot (\frac{1}{k-1}) ^ r \right) + \mathbf{1}_{r > \ell} \cdot \sum_{r > \ell} \left( \Pr[d(\alpha, \alpha^*) = r] \cdot (\frac{1}{k-1}) ^ r \right) \notag \\
        & \le \sum_{r \le \ell} \left( \Pr[d(\alpha, \alpha^*) = r] \cdot (\frac{1}{k-1}) ^ r \right) + (\frac{1}{k-1})^{\ell} \notag \\
        & \le \sum_{r \le \ell} \left( \Pr[d(\alpha, \alpha^*) = r] \cdot (\frac{1}{k-1}) ^ r \right) + \lambda^{\nu} / (k-1) \notag .
    \end{align}
    The last inequality follows by $\ell \ge -\nu \log_{k-1}(\lambda) + 1$. Thus we have:
    \begin{equation*}
        \sum_{r \le \ell} \left( \Pr[d(\alpha, \alpha^*) = r] \cdot (\frac{1}{k-1}) ^ r \right) \ge \frac{k-2}{k-1} \lambda^{\nu}.
    \end{equation*}
    Then there must exist $r^* \in [\ell]^*$ such that $\Pr[d(\alpha, \alpha^*) = r^*] \ge \lambda^{\nu} (k-2) (k-1)^{r^*-1} / (\ell + 1)$. Using this as the lower bound for $\Pr[d(\alpha, \alpha^*) \le r^*]$, we obtain:
    \begin{align*}
        \Pr[\alpha^* \notin \bigcup_{r \in [\ell]^*} \bigcup_{\alpha \in C(r)} B_\alpha(r)] & \le \Pr[\alpha^* \notin \bigcup_{\alpha \in C(r^*)} B_\alpha(r^*)] \\
        & \le (1 - \Pr[d(\alpha, \alpha^*) \le r^*]) ^ {s(r^*)} \\
        &\le (1 - \lambda^{\nu} (k-2) (k-1)^{r^*-1} / (\ell + 1)) ^ {s(r^*)} \\
        &\le \exp(- \lambda^{\nu} (k-2) (k-1)^{r^*-1} / (\ell + 1)  \cdot s(r^*) ) \\
        &\le {|A|}^{-2\nu} .
    \end{align*}
    The last inequality follows from $s(r) \ge -3 \log_{k-1}\lambda \cdot \ln|A| \cdot {\nu}^2 {\lambda}^{- \nu} / (k-1)^r$ and $\ell \le - \nu \log_{k-1}{\lambda} + 2 $.
    There are $|A|^{\nu}$ words in $H$, so the probability that some $\alpha^* \in H$ is not covered by any $C(r)$ is upper bounded by $|A|^{\nu} \cdot {|A|}^{-2\nu} = {|A|}^{-\nu} = 2^{-\Theta(n)} < 1$. As a result, the $\ell$-covering code in Lemma~\ref{general_space} exists.

    The argument for size and construction is as same as in \cite{DBLP:journals/algorithmica/MakinoTY13}.
    W.l.o.g., let $d \ge 2$ be a constant divisor of $\nu$. By partitioning $H$ into $d$ blocks and applying the approximation algorithm for the set covering problem in \cite{dantsin2002deterministic}, we have that an $(\ell / d)$-covering code for each block can be deterministically constructed in time $O^*({|A|}^{3 \nu / d})$ and $|C(r)| \le O^*(\lambda^{-\nu / d} / (k-1)^r)$ for each $r \in [\ell / d] ^*$, because we can explicitly calculate $r^*$ for each word to cover.
    To get an $\ell$-covering code, note that any $r \in [\ell]^*$ can be written as $r = \sum_{j \in [d]} r_j$ where $r_j \in [\ell / d]^*$, thus $C(r)$ can be constructed by taking Cartesian product of $d$ covering codes $C(r_j)~(j \in [d])$. So by Lemma~\ref{product_all}, the construction time for $C(r)$ is:
    \begin{equation*}
        \sum_{\sum_{j \in [d]} r_j = r} \left( O^*(\sum_{j \in [d]} {|A|}^{3 \nu / d} + \prod_{j \in [d]} \lambda^{-\nu / d} / (k-1)^{r_j}) \right) = O^*(\lambda^{-\nu} / (k-1)^r).
    \end{equation*}
    The equality follows by taking large enough $d$ and observing that there are $O(r^d)$ ways to partition $r$ into $d$ positive integers.
    Also by Lemma~\ref{product_all}, the size of the concatenated covering code is upper bounded by its construction time, which is $|C(r)| \le O^*(\lambda^{-\nu} / (k-1)^r)$.
    Therefore we proved this lemma.
\end{proof}

\section{Derandomized Local Search}\label{DLS_section}

In this section,
we present our derandomized local search (\textsf{DLS}),
see Algorithm~\ref{dls_alg}.

\begin{algorithm}
\caption{Derandomized Local Search: \textsf{DLS}}\label{dls_alg}
\begin{algorithmic}[1]
\REQUIRE $k$-CNF $F$, instance $\mathcal{I}$
\ENSURE a satisfying assignment or \verb"Unsatisfiable"
\STATE construct covering code $\mathfrak{C}$ for Hamming space $H(F, \mathcal{I})$ (Definition~\ref{Hamming_instance}) \label{line_construct}
\FOR {every word $\alpha \in \mathfrak{C}$}
    \IF {\textsf{searchball-fast$(F, \alpha, r)$} finds a satisfying assignment $\alpha^*$ for $F$} \label{line_searchball}
        \RETURN $\alpha^*$
    \ENDIF
\ENDFOR
\RETURN \verb"Unsatisfiable"
\end{algorithmic}
\end{algorithm}

The algorithm first constructs the generalized covering code and stores it (Line~\ref{line_construct}), then calls \textsf{searchball-fast} (Line~\ref{line_searchball}) to search inside each Hamming ball,
where \textsf{searchball-fast} refers to the same algorithm proposed in \cite{moser2011full}, whose running time is stated in the following lemma.

\begin{lemma}[\cite{moser2011full}]\label{full_ball}
    Given $k$-CNF $F$, if there exists a satisfying assignment $\alpha^*$ for $F$ in $B_\alpha(r)$, then $\alpha^*$ can be found by \textsf{searchball-fast} in time $(k-1)^{r + o(r)}$.
\end{lemma}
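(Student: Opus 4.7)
The plan is to analyze a standard recursive branching algorithm $\textsf{searchball-fast}(F, \alpha, r)$ of the Moser--Scheder style. On input $(F, \alpha, r)$: if $\alpha$ satisfies $F$ return it; otherwise find an unsatisfied clause $C$, whose $|V(C)| \le k$ literals are all false under $\alpha$, and recurse on the $|V(C)|$ children obtained by flipping each variable of $V(C)$, each with radius $r-1$. The base case $r=0$ reports failure. Correctness is immediate: any hypothetical satisfying $\alpha^* \in B_\alpha(r)$ must satisfy $C$, hence must disagree with $\alpha$ on at least one variable of $V(C)$, and the corresponding flip lies on some branch of the recursion tree of depth at most $r$.

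The naive recurrence gives $T(r) \le k \cdot T(r-1) = k^r$, which is loose. To tighten this to $(k-1)^{r+o(r)}$ I would use the symmetry-breaking refinement that underlies the Moser--Scheder algorithm. Fix an ordering $v_1, \ldots, v_{|V(C)|}$ of the variables of $V(C)$. In branch $i$, flip $v_i$ and simultaneously \emph{commit} to the partial assignment $\alpha^*(v_j) = \alpha(v_j)$ for every $j < i$; these commitments are carried down the recursion. Then whenever a future unsatisfied clause $C'$ contains a committed variable, the corresponding literal in $C'$ is already forced, and since $C'$ is unsatisfied under the current center, that forced literal is false at $\alpha^*$ as well, so $\alpha^*$ must flip one of the at most $k-1$ remaining free variables of $V(C')$. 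Thus once commitments interact with the clause structure, the branching factor drops from $k$ to $k-1$.

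Closing the analysis requires an amortized argument with a measure combining the remaining radius $r$ and a small weight $\epsilon > 0$ per committed variable. One shows that across all branching patterns (no committed variable in $C$, one committed variable in $C$, and so on) the branching recurrence is dominated by $T(\mu) \le (k-1)\, T(\mu - 1)$ for the aggregate measure $\mu$, so that $T(r) \le (k-1)^{r+o(r)} \cdot \mathrm{poly}(n)$, which is absorbed into the stated bound. The main obstacle is this amortized bookkeeping: the first child of each branching (where no new commitment is created) would naively give a $k$-way branch, and one must charge its cost either against the radius decrement or against the commitments accumulated in sibling subtrees, while verifying that the $\epsilon$-slack in the measure is small enough to contribute only $2^{o(r)}$. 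Polynomial per-node overhead for finding unsatisfied clauses and updating commitments fits into the same $2^{o(r)}$ factor, and the final bound $(k-1)^{r+o(r)}$ then follows by standard bounded-search-tree estimates.
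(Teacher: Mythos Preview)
The paper does not prove this lemma; it is quoted verbatim from Moser and Scheder and used as a black box. So there is no in-paper argument to compare against---the question is whether your sketch reproduces the result of \cite{moser2011full}.

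It does not, and the gap is genuine. The commitment-based branching procedure you describe is a correct algorithm, but the amortized bound you promise for it cannot be obtained. Take $F$ to be $r$ pairwise variable-disjoint $k$-clauses on $n=kr$ fresh variables and $\alpha=0^{kr}$; a satisfying $\alpha^*\in B_\alpha(r)$ exists. Every unsatisfied clause the algorithm ever selects is disjoint from every variable committed so far, so each node has fan-out exactly $k$, and commitments never shrink a later clause. With your measure $\mu=r+\varepsilon\cdot(\#\text{uncommitted})$ the branching vector at every node is $(1,1+\varepsilon,\ldots,1+(k-1)\varepsilon)$; writing $y=\beta^{-\varepsilon}$ one gets $\beta=1+y+\cdots+y^{k-1}$, hence
\[
\beta^{\mu_0}=\beta^{\,r(1+\varepsilon k)}=\Bigl(\tfrac{1+y+\cdots+y^{k-1}}{y^{k}}\Bigr)^{r},
\]
and this quantity is minimized at $y=1$ with value $k^r$. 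Thus for \emph{every} $\varepsilon>0$ the measure argument yields at best $k^r$, not $(k-1)^{r+o(r)}$; the ``$2^{o(r)}$ slack'' you hope to absorb is in fact $(k/(k-1))^r=2^{\Theta(r)}$. No bookkeeping can extract a gain from commitments that never touch a future clause.

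What Moser and Scheder actually do is different in kind: they do not prune the recursion tree. They regard Sch\"oning's random walk as a uniformly random choice string over $[k]^t$, use Sch\"oning's analysis to see that for any fixed $\alpha^*\in B_\alpha(r)$ at least a $(k-1)^{-r}/\mathrm{poly}(r)$ fraction of such strings drive the walk to $\alpha^*$, and then construct a second covering code over the space of choice strings, of size $(k-1)^{r+o(r)}$, that hits a good string for every possible $\alpha^*$. The derandomization is of the walk itself, not of the branching enumeration; this covering-code layer is precisely the new idea your sketch is missing.
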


Our generalized covering code is able to cover the following Hamming space.

\begin{definition}\label{Hamming_instance}
    Given $k$-CNF $F$ and instance $\mathcal{I}$, the Hamming space for $F$ and $\mathcal{I}$ is defined as $H(F, \mathcal{I}) = H_0 \times \prod_i H_i$, where:
    \begin{itemize}
        \item $H_0 = \{0, 1\}^{n'}$ where $n' = n - |V(\mathcal{I})|$.
        \item $H_i = {A_i}^{\nu_i}$ for all $i$, where $A_i$ is a solution space and $\nu_i = \Theta(n)$ is the number of chains in $\mathcal{I}$ with solution space $A_i$.
            \footnote{As we shall see in \S{\ref{UBK_section}} and \S{\ref{UB3_section}}, there are only finite number of different solution spaces and finite elements in each solution space. Thus for those $\nu_i = o(n)$, we can enumerate all possible combinations of assignments on them and just get a sub-exponential slowdown, i,e., an $\mathcal{O}(1)$ factor in the upper bound. \label{footnote}}
    \end{itemize}
\end{definition}

Apparently all satisfying assignments of $F$ lie in $H(F, \mathcal{I})$, because $\prod_i H_i$ contains all assignments on $V(\mathcal{I})$ which satisfy all clauses in $\mathcal{I}$ and $H_0$ contains all possible assignments of variables outside $\mathcal{I}$. Therefore to solve $F$, it is sufficient to search the entire $H(F, \mathcal{I})$.

\begin{definition}\label{gcc_def}
    Given $\rho \in (0, \frac{1}{2})$ and Hamming space $H(F, \mathcal{I})$ as above,
    for $L \in \mathbb{Z}^*$,
    define covering code $\mathfrak{C}(L)$ for $H(F, \mathcal{I})$ as a set of covering codes $\{C(r) | (r - \rho n') \in [L]^*\}$ satisfies that $C(r) \subseteq H(F, \mathcal{I})$ for all $r$ and $H(F, \mathcal{I}) \subseteq \bigcup_{(r - \rho n') \in [L]^*} \bigcup_{\alpha \in C(r)} B_\alpha(r) $, i.e., $\mathfrak{C}(L)$ covers $H(F, \mathcal{I})$.
\end{definition}
\begin{lemma}\label{all_code}
    Given Hamming space $H(F, \mathcal{I})$ and $A_i, \nu_i$ as above,
    let $L = \sum_i \ell_i$ where $\ell_i = \lfloor -\nu_i \log\lambda_i + 2 \rfloor$ and $\lambda_i$ is the characteristic value of chain with solution space $A_i$.
    Given $\rho \in (0, \frac{1}{2})$, covering code $\mathfrak{C}(L) = \{C(r) | (r - \rho n') \in [L]^*\}$ for $H(F, \mathcal{I})$ can be deterministically constructed in time $O^*(2^{(1 - h(\rho)) n'} \prod_i {\lambda_i}^{-\nu_i} )$ and $|C(r)| \le O^*(2^{(1 - h(\rho)) n'} / (k-1)^{r - \rho n'} \prod_i {\lambda_i}^{-\nu_i}  ) $ for all $(r - \rho n') \in [L]^*$.
\end{lemma}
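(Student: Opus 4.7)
The plan is to assemble $\mathfrak{C}(L)$ from covering codes for the individual factors of $H(F,\mathcal{I}) = H_0 \times \prod_i H_i$ and glue them together by the Cartesian-product machinery of Lemma~\ref{product_all}. Specifically, I would invoke Lemma~\ref{cover_01_space} at parameter $\rho$ to obtain a radius-$\rho n'$ covering code $C_0(\rho n')$ for $H_0 = \{0,1\}^{n'}$ of size and construction time $O^*(2^{(1-h(\rho))n'})$. Then, for each distinct solution space $A_i$ (and there are only constantly many such $A_i$ by the footnote of Definition~\ref{Hamming_instance}), I would invoke Lemma~\ref{general_space} on $H_i = A_i^{\nu_i}$ to obtain the $\ell_i$-covering code $\{C_i(r_i) \mid r_i \in [\ell_i]^*\}$ with the per-radius bound $|C_i(r_i)| \le O^*(\lambda_i^{-\nu_i}/(k-1)^{r_i})$.

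For each radius $r$ with $(r-\rho n') \in [L]^*$, I would define $C(r)$ as the union, over all decompositions $r - \rho n' = \sum_i r_i$ with $r_i \in [\ell_i]^*$, of the Cartesian products $C_0(\rho n') \times \prod_i C_i(r_i)$. The covering property then follows directly from Lemma~\ref{product_all}: any $\alpha^* = \alpha_0^* \uplus (\uplus_i \alpha_i^*) \in H(F,\mathcal{I})$ admits $\alpha_0 \in C_0(\rho n')$ with $d(\alpha_0, \alpha_0^*) \le \rho n'$ and, for each $i$, some $\alpha_i \in C_i(r_i^*)$ with $d(\alpha_i, \alpha_i^*) \le r_i^*$ for some $r_i^* \in [\ell_i]^*$; the concatenation then sits at distance at most $\rho n' + \sum_i r_i^* \le \rho n' + L$ from $\alpha^*$ and thus lies inside some $C(r) \in \mathfrak{C}(L)$.

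For the size and construction-time bounds, each fixed decomposition of $r - \rho n'$ contributes $O^*(2^{(1-h(\rho))n'} \cdot \prod_i \lambda_i^{-\nu_i}/(k-1)^{r_i})$, and since $\prod_i (k-1)^{-r_i} = (k-1)^{-(r-\rho n')}$ this already matches the claimed $O^*(2^{(1-h(\rho))n'} \prod_i \lambda_i^{-\nu_i} / (k-1)^{r-\rho n'})$. Summing over $r$ and using $\sum_{r_i \ge 0} (k-1)^{-r_i} = O(1)$ collapses the geometric series and yields the stated total construction-time bound $O^*(2^{(1-h(\rho))n'} \prod_i \lambda_i^{-\nu_i})$. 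The main bookkeeping obstacle, such as it is, is to absorb the polynomial blowup from summing over decompositions of $r - \rho n'$ into the $O^*$ notation; this is legitimate precisely because the number of distinct solution spaces $A_i$ is a constant depending only on $k$, so the number of such decompositions is polynomial in $n$.
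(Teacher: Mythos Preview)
Your proposal is correct and follows essentially the same approach as the paper: build $C_0(\rho n')$ for $H_0$ via Lemma~\ref{cover_01_space}, build the $\ell_i$-covering codes for each $H_i$ via Lemma~\ref{general_space}, take Cartesian products via Lemma~\ref{product_all}, and absorb the polynomially many decompositions of $r-\rho n'$ (since there are only $O(1)$ factors $A_i$) into $O^*$, then collapse the geometric series over $r$ for the total construction time. The paper's write-up is terser but structurally identical.
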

\begin{proof}
    To construct $\mathfrak{C}(L)$ for $H(F, \mathcal{I})$, we construct covering code $C_0(\rho n')$ for $H_0 = \{0,1\}^{n'}$ and $\ell_i$-covering code for $H_i = {A_i}^{\nu_i}$ for all $i$, then take a Cartesian product of all the codes.
    By Lemma~\ref{cover_01_space}, the time taken for constructing $C_0(\rho n')$ is $O^*(2^{(1 - h(\rho)) n'})$, and $|C_0(\rho n')| \le O^*(2^{(1 - h(\rho)) n'})$. By Lemma~\ref{general_space}, for each $i$, the time taken for constructing $C(r_i)$ for each $r_i \in [\ell_i]^*$ is $O^*({\lambda_i}^{- \nu_i} / (k-1)^{r_i})$ and $|C(r_i)| \le O^*({\lambda_i}^{- \nu_i} / (k-1)^{r_i})$.
    So by Lemma~\ref{product_all}, we have that $|C(r)|$ can be upper bounded by:
    \begin{equation*}
        2^{(1 - h(\rho)) n'} \cdot \sum_{\sum_{i} r_i = r - \rho n'} \left( \prod_{i} O^*({\lambda_i}^{-\nu_i} / (k-1)^{r_i}  ) \right) = O^*(2^{(1 - h(\rho)) n'} / (k-1)^{r - \rho n'} \prod_{i} {\lambda_i}^{-\nu_i}  ).
    \end{equation*}
    The equality holds because 
    $L$ is a linear combination of $\nu_i$ with constant coefficients and $\nu_i = \Theta(n)$, thus there are $O(1)$ terms in the product since $\sum_i \nu_i \le n$.
    Meanwhile, there are $O^*(1)$ ways to partition $(r - \rho n')$ into constant number of integers, thus the outer sum has $O^*(1)$ terms.
    Together we get an $O^*(1)$ factor in the right-hand side.

    The construction time includes constructing each covering code for $H_i~(i \ge 0)$ and concatenating each of them by Lemma~\ref{product_all}, which is dominated by the concatenation time. As a result, the time taken to construct $C(r)$ for all $(r - \rho n') \in [L]^*$ is:
    \begin{align*}
        \sum_{(r - \rho n') \in [L]^*} O^*(2^{(1 - h(\rho)) n'} / (k-1)^{r - \rho n'} \prod_{i} {\lambda_i}^{-\nu_i}  )
         = O^*(2^{(1 - h(\rho)) n'} \prod_{i} {\lambda_i}^{-\nu_i} ),
    \end{align*}
    because it is the sum of a geometric series. Therefore conclude the proof.
\end{proof}

Using our generalized covering code and applying Lemma~\ref{full_ball} for \textsf{searchball-fast} (Line~\ref{line_searchball} in Algorithm~\ref{dls_alg}), we can upper bound the running time of \textsf{DLS}.

\begin{lemma}\label{dls_upper_bound}
    Given $k$-CNF $F$ and instance $\mathcal{I}$, \textsf{DLS} runs in time $T_{\text{DLS}} = \mathcal{O}((\frac{2(k-1)}{k})^{n'} \cdot \prod_{i} {\lambda_i}^{-\nu_i})$, where $n' = n - |V(\mathcal{I})|$, $\lambda_i$ is the characteristic value of chain $\mathcal{S}_i$ and $\nu_i$ is number of chains in $\mathcal{I}$ with the same solution space to $\mathcal{S}_i$.
\end{lemma}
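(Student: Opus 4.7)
The plan is to plug the generalized covering code from Lemma~\ref{all_code} directly into Algorithm~\ref{dls_alg} and bound the total cost as (construction time of $\mathfrak{C}(L)$) plus (sum over all codewords of one call to \textsf{searchball-fast}). For the radius parameter I would choose $\rho = 1/k$ on the $H_0$-side; the remaining radii on the $H_i$-sides are automatically handled inside the $\ell_i$-covering code produced by Lemma~\ref{general_space}.

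First I would invoke Lemma~\ref{all_code} with $\rho = 1/k$ to get $\mathfrak{C}(L) = \{C(r) \mid (r - \rho n') \in [L]^*\}$ with construction time $O^*(2^{(1-h(\rho))n'} \prod_i \lambda_i^{-\nu_i})$ and per-radius size $|C(r)| \le O^*(2^{(1-h(\rho))n'} (k-1)^{-(r-\rho n')} \prod_i \lambda_i^{-\nu_i})$. Then, by Lemma~\ref{full_ball}, each call \textsf{searchball-fast}$(F,\alpha,r)$ runs in $(k-1)^{r + o(r)}$ time, so summing the total searching cost gives
\begin{equation*}
\sum_{(r-\rho n') \in [L]^*} |C(r)| \cdot (k-1)^{r + o(r)}
\;\le\; \sum_{r} O^*\!\left(2^{(1-h(\rho))n'} (k-1)^{\rho n'} \prod_i \lambda_i^{-\nu_i}\right) \cdot (k-1)^{o(r)}.
\end{equation*}
The $(k-1)^r$ factors from $|C(r)|$ and from \textsf{searchball-fast} cancel exactly, and the outer sum has only $L+1 = O(n)$ terms, which together with the $(k-1)^{o(r)}$ slack is absorbed into the $\mathcal{O}(\cdot)$ notation. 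The construction cost is dominated by the search cost, so both parts combine into $\mathcal{O}\!\left(2^{(1-h(\rho))n'} (k-1)^{\rho n'} \prod_i \lambda_i^{-\nu_i}\right)$.

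Finally I would verify the algebraic identity
\begin{equation*}
2^{1 - h(1/k)} \cdot (k-1)^{1/k}
\;=\; 2 \cdot (1/k)^{1/k} \cdot ((k-1)/k)^{(k-1)/k} \cdot (k-1)^{1/k}
\;=\; 2\cdot \frac{k-1}{k},
\end{equation*}
obtained by unfolding $h(1/k)$ and collecting the $(k-1)/k$ terms. Substituting $\rho = 1/k$ therefore yields exactly the claimed $(2(k-1)/k)^{n'} \prod_i \lambda_i^{-\nu_i}$ bound. As a final bookkeeping remark, the footnote in Definition~\ref{Hamming_instance} allows me to enumerate any chain-classes with $\nu_i = o(n)$ at a $2^{o(n)}$ slowdown, which is again swallowed by $\mathcal{O}(\cdot)$. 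There is no real technical obstacle here; the only point that requires care is that the $(k-1)^r$ factors cancel across all radii so that the choice of $\rho$ really does drive the base of the exponent, and the tradeoff $\rho = 1/k$ is provably optimal by differentiating $(1-h(\rho)) + \rho \log_2(k-1)$.
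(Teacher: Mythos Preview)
Your proposal is correct and follows essentially the same route as the paper: invoke Lemma~\ref{all_code} for the code sizes and construction time, multiply by the $(k-1)^{r+o(r)}$ cost from Lemma~\ref{full_ball}, observe that the $(k-1)^r$ factors cancel so the sum over radii collapses, absorb the $o(n)$-chain enumeration via the footnote, and finally set $\rho=1/k$ to obtain the base $2(k-1)/k$. The only cosmetic differences are that you spell out the identity $2^{1-h(1/k)}(k-1)^{1/k}=2(k-1)/k$ explicitly and add the (not strictly needed) remark that $\rho=1/k$ is the optimizer; the paper simply plugs in $\rho=1/k$ without justification.
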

\begin{proof}
    The running time includes the construction time for $\mathfrak{C}(L)$ and the total searching time in all Hamming balls. It is easy to show that the total time is dominated by the searching time using Lemma~\ref{all_code}, thus we have the following equation after multiplying a sub-exponential factor $\mathcal{O}(1)$ for the other $o(n)$ chains not in $\mathcal{I}$ (see footnote \ref{footnote}):
    \begin{align*}
        T_{\text{DLS}} &= \mathcal{O}(1) \cdot \sum_{(r - \rho n') \in [L]^*} \left( |C(r)| \cdot (k-1)^{r + o(r)} \right) \\
        & = \mathcal{O}(1) \cdot \sum_{(r - \rho n') \in [L]^*} \left( O^*(2^{(1 - h(\rho)) n'} / (k-1)^{r - \rho n'} \prod_{i} {\lambda_i}^{-\nu_i}  ) \cdot (k-1)^{r + o(r)} \right) \\
        & = \mathcal{O}( 2^{(1 - h(\rho) + \rho \log(k-1))n'} \cdot \prod_{i} {\lambda_i}^{-\nu_i}) \\
        &= \mathcal{O}( (\frac{2(k-1)}{k})^{n'} \cdot \prod_{i} {\lambda_i}^{-\nu_i}) .
    \end{align*}
    The first equality follows from Lemma~\ref{full_ball}, the second inequality is from Lemma~\ref{all_code}, and the last equality follows by setting $\rho = \frac{1}{k}$. Therefore we proved this lemma.
\end{proof}

\section{Upper Bound for k-SAT}\label{UBK_section}

In this section, we give our main result on upper bound for $k$-SAT.

A simple branching algorithm \textsf{BR} for general $k$-SAT is given in Algorithm~\ref{br_k}: Greedily construct a maximal instance $\mathcal{I}$ consisting of independent $1$-chains and branch on all satisfying assignments of it if $|\mathcal{I}|$ is small. \footnote{W.l.o.g., one can negate all negative literals in $\mathcal{I}$ to transform the solution space of $1$-chain to $\{0,1\}^k \backslash 0^k$.}
After fixing all variables in $V(\mathcal{I})$, the remaining formula is a $(k-1)$-CNF due to the maximality of $\mathcal{I}$.
Therefore the running time of \textsf{BR} is at most:
\begin{equation}
    T_{\textsf{BR}} = \mathcal{O}((2^k - 1)^{|\mathcal{I}|} \cdot {c_{k-1}}^{n - k|\mathcal{I}|}), \label{br_k_upper_bound}
\end{equation}
where $\mathcal{O}({c_{k-1}}^n)$ is the worst-case upper bound of a deterministic $(k-1)$-SAT algorithm.

\begin{algorithm}
\caption{Branching Algorithm \textsf{BR} for $k$-SAT}
\label{br_k}
\begin{algorithmic}[1]
\REQUIRE $k$-CNF $F$
\ENSURE a satisfying assignment or \verb"Unsatisfiable" or an instance $\mathcal{I}$
\STATE staring from $\mathcal{I} \leftarrow \emptyset$, \textbf{for} $1$-chain $\mathcal{S}: V(\mathcal{I}) \cap V(\mathcal{S}) = \emptyset$, \textbf{do} $\mathcal{I} \leftarrow \mathcal{I} \cup \mathcal{S}$ 
\IF {$|\mathcal{I}| < \nu n$}
    \FOR {each assignment $\alpha \in \{\{0,1\}^k \backslash 0^k\}^{|\mathcal{I}|}$ of $\mathcal{I}$}
        \STATE solve $F|\alpha$ by deterministic $(k-1)$-SAT algorithm
        \STATE \textbf{return} the satisfying assignment if satisfiable
    \ENDFOR
    \RETURN \verb"Unsatisfiable"
\ELSE
    \RETURN $\mathcal{I}$
\ENDIF
\end{algorithmic}
\end{algorithm}

On the other hand, since there are only $1$-chains in $\mathcal{I}$, by Lemma~\ref{dls_upper_bound} we have:
\begin{equation}
    T_{\text{DLS}} = \mathcal{O}((\frac{2(k-1)}{k})^{n - k |\mathcal{I}|} \cdot \lambda^{-|\mathcal{I}|}) \label{dls_k_upper_bound}.
\end{equation}
It remains to calculate the characteristic value $\lambda$ of $1$-chain $\mathcal{S}^{(k)}$.
We prove the following lemma for the unique solution of linear programming $\text{LP}_A$ in Definition~\ref{charactoristic_def}.

\begin{lemma}\label{1_chain_lp_solution}
    For $1$-chain $\mathcal{S}^{(k)}$, let $A$ be its solution space, then
    the characteristic distribution $\pi$ satisfies
    \begin{equation*}
        \pi(a) = \frac{(k-1)^k}{(2k-2)^k - (k-2)^k} \cdot (1 - (\frac{-1}{k-1})^{d(a, 0^k)}) \textit{~for all~} a \in A ,
    \end{equation*}
    and the characteristic value
    $$\lambda = \frac{k^k}{(2k-2)^k - (k-2)^k}.$$
\end{lemma}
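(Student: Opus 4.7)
The plan is to exploit the uniqueness of the LP solution noted in the Remark: it suffices to check that the claimed $\pi$ and $\lambda$ are feasible for $\text{LP}_A$. After negating all negative literals, the $1$-chain becomes $v_1 \vee \cdots \vee v_k$, so $A = \{0,1\}^k \setminus \{0^k\}$ and the Hamming distance $d(a, 0^k)$ equals the weight $w(a) := \sum_i a_i$. I would extend the candidate to all of $\{0,1\}^k$ by setting $\pi(0^k) = 0$ (consistent with the formula, since $1 - (-1/(k-1))^0 = 0$), which leaves the sums unchanged but enables a clean product factorization over coordinates.

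The key algebraic trick is to write, with $t := -1/(k-1)$, the proposed $\pi(a) = C(1 - t^{w(a)})$, where $C = (k-1)^k/((2k-2)^k - (k-2)^k)$. For the weighted sum defining the characteristic value at an arbitrary $a^* \in A$, I would expand
\begin{equation*}
\sum_{a \in \{0,1\}^k} t^{w(a)} \left(\tfrac{1}{k-1}\right)^{d(a,a^*)} = \prod_{i=1}^{k} \sum_{a_i \in \{0,1\}} t^{a_i} \left(\tfrac{1}{k-1}\right)^{a_i \oplus a^*_i},
\end{equation*}
and observe that the $i$-th factor is $1 + t/(k-1) = k(k-2)/(k-1)^2$ when $a^*_i = 0$, and $1/(k-1) + t = 0$ when $a^*_i = 1$. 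Since $a^* \neq 0^k$, at least one coordinate contributes a zero, so the whole product vanishes. Consequently the ``correction'' term drops out and only
\begin{equation*}
\sum_{a \in \{0,1\}^k} \left(\tfrac{1}{k-1}\right)^{d(a,a^*)} = \left(1 + \tfrac{1}{k-1}\right)^k = \left(\tfrac{k}{k-1}\right)^k
\end{equation*}
survives, giving $\lambda = C \cdot (k/(k-1))^k = k^k/((2k-2)^k - (k-2)^k)$ uniformly in $a^*$.

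For normalization, the same product trick handles $\sum_{a \in \{0,1\}^k} t^{w(a)} = (1+t)^k = ((k-2)/(k-1))^k$, so
\begin{equation*}
\sum_{a \in A} \pi(a) = C\left(2^k - \left(\tfrac{k-2}{k-1}\right)^k\right) = C \cdot \frac{(2k-2)^k - (k-2)^k}{(k-1)^k} = 1
\end{equation*}
by the choice of $C$. Non-negativity is immediate: when $w(a)$ is even, $t^{w(a)} = (1/(k-1))^{w(a)} \le 1$ (using $k \ge 3$), and when $w(a)$ is odd, $t^{w(a)} < 0$, so $1 - t^{w(a)} \ge 0$ in both cases, and $C > 0$. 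By the uniqueness asserted in the Remark, the pair $(\pi, \lambda)$ I verified is the characteristic distribution and value, completing the proof. The only ``obstacle'' worth flagging is the coordinate-wise cancellation $1/(k-1) + t = 0$; once one notices that $t = -1/(k-1)$ is engineered precisely for this, the remainder is routine bookkeeping.
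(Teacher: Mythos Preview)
Your proof is correct and follows essentially the same approach as the paper: verify that the proposed $(\pi,\lambda)$ is feasible for $\text{LP}_A$ by extending the sum to all of $\{0,1\}^k$ (the $0^k$ term vanishes), splitting off the main term $\sum_a (1/(k-1))^{d(a,a^*)} = (k/(k-1))^k$, and showing the correction term is zero because $a^*$ has some coordinate equal to $1$. The only cosmetic difference is that you express the vanishing via a coordinate-wise product factorization (the factor at any $i$ with $a^*_i=1$ is $1/(k-1)+t=0$), whereas the paper pairs terms by the involution that flips such a coordinate; these are two presentations of the same cancellation.
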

\begin{proof}
    We prove that this is a feasible solution to $\text{LP}_A$.
    Constraint $\pi(a) \ge 0~(\forall a \in A)$ is easy to verify. To show constraint $\sum_{a \in A} \pi(a) = 1$ holds, let $y = d(a, 0^k)$ and note there are $\binom{k}{y}$ different $a \in A$ with $d(a, 0^k) = y$, then multiply $\frac{(2k-2)^k - (k-2)^k}{(k-1)^k}$ on both sides:
    \begin{align*}
        \frac{(2k-2)^k - (k-2)^k}{(k-1)^k} \cdot \sum_{a \in A} \pi(a) &= \sum_{1 \le y \le k} \left( (1 - (\frac{-1}{k-1})^y) \cdot \binom{k}{y} \right) \\
        &= \sum_{0 \le y \le k} \binom{k}{y} - \sum_{0 \le y \le k} \binom{k}{y} (\frac{-1}{k-1})^y \\
        &= 2^k - (\frac{k-2}{k-1})^k \\
        &= \frac{(2k-2)^k - (k-2)^k}{(k-1)^k}.
    \end{align*}
    Thus $\sum_{a \in A} \pi(a) = 1$ holds.

    To prove $\lambda = \sum_{a \in A} \left( \pi(a) \cdot (\frac{1}{k-1}) ^ {d(a, a^*)} \right)$, similar to the previous case, we multiply $\frac{(2k-2)^k - (k-2)^k}{(k-1)^k}$ on both sides.
    Note that adding the term at $a = 0^k$ does not change the sum, then for all $a^* \in A$, we have:
    \begin{align*}
        \text{RHS} &= \sum_{a \in A} (1 - (\frac{-1}{k-1})^{d(a, 0^k)}) \cdot (\frac{1}{k-1}) ^ {d(a, a^*)} \\
        &= \sum_{a \in \{0,1\}^k} (1 - (\frac{-1}{k-1})^{d(a, 0^k)}) \cdot (\frac{1}{k-1}) ^ {d(a, a^*)} \\
        &= \sum_{a \in \{0,1\}^k}(\frac{1}{k-1}) ^ {d(a, a^*)} - \sum_{a \in \{0,1\}^k} (-1)^{d(a, 0^k)} (\frac{1}{k-1}) ^ {d(a, 0^k) + d(a, a^*)} .
    \end{align*}
    The first term is equal to $(\frac{k}{k-1})^k = \text{LHS}$.
    To prove the second term is $0$, note that $\exists i \in [k]$ such that some bit $a^*_i = 1$.
    Partition $\{0,1\}^k$ into two sets $S_0 = \{a \in \{0,1\}^k | a_i = 0 \}$ and $S_1 = \{a \in \{0,1\}^k | a_i = 1 \}$.
    We have the following bijection: For each $a \in S_0$, negate the $i$-th bit to get $a' \in S_1$.
    Then $d(a, 0^k) + d(a, a^*) = d(a', 0^k) + d(a', a^*)$ and $(-1)^{d(a, 0^k)} = - (-1)^{d(a', 0^k)}$, so the sum is $0$.
    Therefore we verified the constraint and proved the lemma.
\end{proof}


Observe from (\ref{br_k_upper_bound}) and (\ref{dls_k_upper_bound}) that $T_{\textsf{BR}}$ is an increasing function of $|\mathcal{I}|$, while $T_{\textsf{DLS}}$ is a decreasing function of it, so $T_{\textsf{BR}} = T_{\textsf{DLS}}$ gives the worst-case upper bound for $k$-SAT.
We solve this equation by plugging in $\lambda$ from Lemma~\ref{1_chain_lp_solution} to get $\nu n$ as the worst-case $|\mathcal{I}|$, and obtain the following theorem as our main result on $k$-SAT.

\begin{theorem}\label{main_k_sat_general_form}
    Given $k \ge 3$,
    if there exists a deterministic algorithm for $(k - 1)$-SAT that runs in time $\mathcal{O}({c_{k-1}}^n)$,
    then there exists a deterministic algorithm for $k$-SAT that runs in time $\mathcal{O}({c_k}^n)$, where
    $$c_k = (2^k - 1)^{\nu} \cdot {c_{k-1}}^{1 - k \nu}$$
    and
    \begin{equation*}
        \nu = \frac{\log(2k - 2) - \log{k} - \log{c_{k-1}}} { \log(2^k - 1) - \log(1 - (\frac{k-2}{2k-2})^k) - k \log{c_{k-1}} }.
    \end{equation*}
\end{theorem}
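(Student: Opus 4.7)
The plan is to run Algorithm~\ref{Framework} instantiated with the branching algorithm $\textsf{BR}$ of Algorithm~\ref{br_k}, bound the two branches separately, and choose the threshold $\nu$ that balances them. If $|\mathcal{I}| < \nu n$, the formula is solved by $\textsf{BR}$ within time $T_{\textsf{BR}}$ given by \eqref{br_k_upper_bound}, using the hypothesized deterministic $(k-1)$-SAT algorithm on each of the $(2^k - 1)^{|\mathcal{I}|}$ branches after fixing the $k|\mathcal{I}|$ variables of $V(\mathcal{I})$; otherwise the returned instance is fed to $\textsf{DLS}$, whose running time $T_{\text{DLS}}$ is given by \eqref{dls_k_upper_bound} via Lemma~\ref{dls_upper_bound}, with the $1$-chain characteristic value $\lambda$ supplied by Lemma~\ref{1_chain_lp_solution}.

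Next I would exploit the opposite monotonicity in $|\mathcal{I}|$: $T_{\textsf{BR}}$ scales like $((2^k - 1)/c_{k-1}^k)^{|\mathcal{I}|}$ (the $n$-part factored out), whereas $T_{\text{DLS}}$ scales like $(1/(\lambda \cdot (2(k-1)/k)^k))^{|\mathcal{I}|}$; the inductive bound $1 \le c_{k-1} < 2(k-1)/k$ together with the closed form $\lambda = k^k/((2k-2)^k - (k-2)^k) > (k/(2(k-1)))^k$ makes the first base-ratio exceed $1$ and the second fall below $1$. Consequently the worst case of the combined algorithm is attained at the unique crossing $T_{\textsf{BR}} = T_{\text{DLS}}$, and the common value at $|\mathcal{I}| = \nu n$ immediately gives $c_k = (2^k - 1)^{\nu} \cdot c_{k-1}^{1 - k\nu}$, exactly matching the claimed base.

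The remaining work is to solve the single linear equation in $\nu$ obtained from $\log T_{\textsf{BR}} = \log T_{\text{DLS}}$ after substituting the closed form of $\lambda$. Using the identity
$$-\log \lambda = k(\log(2k-2) - \log k) + \log\!\bigl(1 - ((k-2)/(2k-2))^k\bigr),$$
the $(1 - k\nu)(\log(2k-2) - \log k)$ term from $\log T_{\text{DLS}}$ combines with the $k\nu$ contribution coming from $-\log \lambda$ to collapse into a bare $\log(2k-2) - \log k$ on one side. Rearranging the resulting affine equation for $\nu$ yields exactly the fraction stated in the theorem. The only non-routine step is verifying the two monotonicity inequalities used above; both reduce to elementary algebra on the closed forms of $c_{k-1}$ and $\lambda$, so this is the mildest obstacle rather than a genuine difficulty.
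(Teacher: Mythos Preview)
Your proposal is correct and follows essentially the same route as the paper: instantiate Algorithm~\ref{Framework} with Algorithm~\ref{br_k}, invoke \eqref{br_k_upper_bound} and \eqref{dls_k_upper_bound} with the $1$-chain characteristic value from Lemma~\ref{1_chain_lp_solution}, use the opposite monotonicity in $|\mathcal{I}|$ to locate the worst case at $T_{\textsf{BR}}=T_{\textsf{DLS}}$, and solve for $\nu$. Your explicit verification of the two monotonicity inequalities and the algebraic collapse producing the stated $\nu$ are slightly more detailed than the paper's ``observe\dots and solve'' sentence, but the argument is the same.
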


Note that the upper bound for $3$-SAT implied by this theorem is $O(1.33026^n)$, 
but we can do better by applying Theorem~\ref{main2} (presented later) for $c_3 = 3^{\log{\frac{4}{3}} / \log{\frac{64}{21}}} < 1.32793$ to prove all upper bounds for $k$-SAT ($k \ge 4$) in Table~\ref{table_result} of \S{\ref{intro}}.

\section{Upper Bound for 3-SAT}\label{UB3_section}

In this section, we provide a better upper bound for $3$-SAT by a more ingenious branching algorithm.

First of all, we introduce some additional notations in $3$-CNF simplification, then we present our branching algorithm for $3$-SAT from high-level to all its components. Lastly we show how to combine it with the derandomized local search to achieve a tighter upper bound.

\subsection{Additional Notations}

For every clause $C \in F$, if partial assignment $\alpha$ satisfies $C$, then $C$ is removed in $F|\alpha$. Otherwise, the literals in $C$ assigned to $0$ under $\alpha$ are removed from $C$.
If all the literals in $C$ are removed, which means $C$ is unsatisfied under $\alpha$, we replace $C$ by $\bot$ in $F|\alpha$.
Let $G=F|\alpha$,
for every $C \in F$, we use $C^F$ to denote the clause $C$ in $F$ and $C^G \in G$ the new clause derived from $C$ by assigning variables according to $\alpha$.
We use $\mathcal{F}$ to denote the original input $3$-CNF without instantiating any variable, and $C^{\mathcal{F}}$ is called the \emph{original form} of clause $C$.

Let $\textsf{UP}(F)$ be the CNF derived by running \emph{Unit Propagation} on $F$ until there is no $1$-clause in $F$.
Clearly $F$ is satisfiable iff $\textsf{UP}(F)$ is satisfiable, and $\textsf{UP}$ runs in polynomial time \cite{davis1962machine}.

We will also use the set definition of CNF, i.e.,
for a CNF $F = \bigwedge_{i \in [m]} C_i$, it is equivalent to write $F = \{C_i | i \in[m]\}$.
Define $\mathcal{T}(F), \mathcal{B}(F), \mathcal{U}(F)$ as the set of all the $3$-clauses, $2$-clauses and $1$-clauses in $F$ respectively. 
We have that any $3$-CNF $F = \mathcal{T}(F) \cup \mathcal{B}(F) \cup \mathcal{U}(F)$.

\subsection{Branching Algorithm for 3-SAT}

In this subsection, we give our branching algorithm for $3$-SAT (Algorithm~\ref{BR_alg}).
The algorithm is recursive and follows a depth-first search manner:
\begin{itemize}
    \item Stop the recursion when certain conditions are met (Line~\ref{line_condition} and Line~\ref{line_sat}).
    \item Backtrack when the current branch is unsatisfiable (Line~\ref{line_unsat1}, Line~\ref{line_unsat2} and Line~\ref{line_unsat3}).
    \item Branch on all possible satisfying assignments on a clause and recursively call itself (Line~\ref{line_branch}). Return \verb"Unsatisfiable" if all branches return \verb"Unsatisfiable".
    \item Clause sequence $\mathcal{C}$ stores all the branching clauses from root to the current node.
\end{itemize}

It is easy to show that this algorithm is correct
as long as \emph{procedure} $\mathcal{P}$ maintains satisfiability.

\begin{algorithm}
\caption{Branching Algorithm \textsf{BR} for $3$-SAT}
\label{BR_alg}
\begin{algorithmic}[1]
\REQUIRE $3$-CNF $F$, clause sequence $\mathcal{C}$
\ENSURE a satisfying assignment or \verb"Unsatisfiable" or a clause sequence $\mathcal{C}$
\STATE simplify $F$ by \emph{procedure} $\mathcal{P}$ \label{line_simplify}
\IF {$\bot \in F$}
    \RETURN \verb"Unsatisfiable" \label{line_unsat1}
\ELSIF {\emph{condition} $\Phi$ holds} \label{line_condition}
    \STATE stop the recursion, \emph{transform} $\mathcal{C}$ to an instance $\mathcal{I}$ and \textbf{return} $\mathcal{I}$ \label{line_transform}
\ELSIF {$F$ is $2$-CNF}
    \STATE deterministically solve $F$ in polynomial time
    \IF {$F$ is satisfiable} \label{line_sat}
        \STATE stop the recursion and \textbf{return} the satisfying assignment
    \ELSE
        \RETURN \verb"Unsatisfiable" \label{line_unsat2}
    \ENDIF
\ELSE
    \STATE choose a clause $C$ according to \emph{rule} $\Upsilon$ \label{line_rule}
    \STATE for every satisfying assignment $\alpha_C$ of $C$, call \textsf{BR}$(F|\alpha_C, \mathcal{C} \cup C^{\mathcal{F}})$ \label{line_branch}
    \RETURN \verb"Unsatisfiable" \label{line_unsat3}
\ENDIF
\end{algorithmic}
\end{algorithm}

In what follows, we introduce
(\romannumeral1) the \emph{procedure} $\mathcal{P}$ for simplification (Line~\ref{line_simplify});
(\romannumeral2) the clause choosing \emph{rule} $\Upsilon$ (Line~\ref{line_rule});
(\romannumeral3) the \emph{transformation} from clause sequence to instance (Line~\ref{line_transform});
(\romannumeral4) the termination \emph{condition} $\Phi$ (Line~\ref{line_condition}).
All of them are devoted to analyzing the running time of \textsf{BR} as a function of an instance.

\subsubsection{Simplification Procedure}\label{SP}

The simplification relies on the following two lemmas.

\begin{lemma}[\cite{monien1985solving}]\label{autark_lem}
    Given $3$-CNF $F$ and partial assignment $\alpha$, define $$\mathcal{TB}(F, \alpha)=\{C | C \in \mathcal{B}(\textsf{UP}(F | \alpha), C^F \in \mathcal{T}(F)\}.$$
    If $\bot \notin \textsf{UP}(F | \alpha)$ and $\mathcal{TB}(F, \alpha) = \emptyset$, then $F$ is satisfiable iff $\textsf{UP}(F | \alpha)$ is satisfiable and $\alpha$ is called an \emph{autark}.
\end{lemma}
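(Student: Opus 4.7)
The plan is to let $\gamma$ denote the partial assignment obtained by extending $\alpha$ with every forced setting produced during the execution of $\textsf{UP}(F|\alpha)$, write $V_\gamma$ for the variables set by $\gamma$, and then establish the classical \emph{autark property} for $\gamma$: every clause of $F$ whose variable set intersects $V_\gamma$ is satisfied by $\gamma$. Granting this property, the equivalence is immediate. For the easy direction, if $\beta$ satisfies $\textsf{UP}(F|\alpha)$, then combining $\gamma$ with $\beta$ on the disjoint complementary variables satisfies $F$. For the reverse direction, given any satisfying assignment $\sigma$ of $F$, overwriting $\sigma$ on $V_\gamma$ with $\gamma$ still satisfies $F$ (clauses touching $V_\gamma$ are handled by the autark property, and clauses avoiding $V_\gamma$ remain satisfied by $\sigma$), and its restriction to the remaining variables satisfies $\textsf{UP}(F|\alpha) = F|\gamma$.

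The core step is the autark claim, which I would prove by case analysis on the original arity of a clause $C \in F$ with $V(C) \cap V_\gamma \neq \emptyset$. If $C$ is a $1$-clause, then $\textsf{UP}$ sets its single variable so as to satisfy it, since $\bot \notin \textsf{UP}(F|\alpha)$ forbids the opposite. If $C$ is a $2$-clause, then after applying $\gamma$ it is either immediately satisfied or reduced to a $1$-clause which is in turn consumed by $\textsf{UP}$ and satisfied. If $C$ is a $3$-clause, then the only way $\gamma$ could fail to satisfy $C$ is for $C$ to survive in $\textsf{UP}(F|\alpha)$ as a strictly shorter clause; a $2$-clause is forbidden by $\mathcal{TB}(F,\alpha) = \emptyset$, a $1$-clause would have been further processed by $\textsf{UP}$, and $\bot$ is ruled out by hypothesis. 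Hence $\gamma$ satisfies $C$ in every case, and moreover $\textsf{UP}(F|\alpha)$ contains exactly those clauses of $F$ whose variables avoid $V_\gamma$, yielding the identity $\textsf{UP}(F|\alpha) = F|\gamma$ used above.

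I expect the main subtlety to lie in the dynamic nature of unit propagation: a $3$-clause of $F$ may become a $2$-clause only after several unit propagation steps rather than directly from $\alpha$, and one must be sure that such a clause still counts against $\mathcal{TB}(F,\alpha)$. The key observation that resolves this is that the definition of $\mathcal{TB}(F,\alpha)$ tracks the \emph{original} arity $C^F$ and not the arity of $C$ at any intermediate stage, so any $3$-clause of $F$ that is reduced to a surviving $2$-clause at any point during $\textsf{UP}(F|\alpha)$ still contributes to $\mathcal{TB}(F,\alpha)$ and is therefore excluded by the hypothesis. With this point clarified, the case analysis above carries through cleanly and the lemma follows.
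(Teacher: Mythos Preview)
Your argument is correct. Both you and the paper hinge on the same observation---that under the hypotheses, no clause of $F$ is strictly shortened in $G=\textsf{UP}(F|\alpha)$---but you package it differently. The paper simply notes that $G$ contains neither $1$-clauses (by definition of $\textsf{UP}$) nor new $2$-clauses (by $\mathcal{TB}(F,\alpha)=\emptyset$), hence $G\subseteq F$ as a set of clauses, which makes the direction ``$F$ satisfiable $\Rightarrow$ $G$ satisfiable'' immediate: any satisfying assignment of $F$ already satisfies every clause of $G$. You instead prove the dual statement (the classical autark property: $\gamma$ satisfies every clause it touches) and then run the overwriting argument $\sigma\mapsto\sigma[\gamma]$ to get the same direction. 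Your route is a bit longer---the overwriting step is not needed once you know $G\subseteq F$ and $V(G)\cap V_\gamma=\emptyset$---but it has the advantage of making the autark structure explicit, which is exactly what the subsequent branching rule $\Upsilon$ exploits. The subtlety you flag in your last paragraph is real but, as you note, already handled by the definition of $\mathcal{TB}$ in terms of the final output of $\textsf{UP}$ and the original clause $C^F$.
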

\begin{proof}
    Recall that $\textsf{UP}$ maintains satisfiability. Let $G = \textsf{UP}(F | \alpha)$.
    If $G$ is satisfiable, then $F$ is obviously satisfiable. Also observe that $G$ is a subset of $F$ since there is neither $1$-clause nor new $2$-clause in $G$, so any satisfying assignment of $F$ satisfies $G$ too.
\end{proof}
We also provide the following stronger lemma to further reduce the formula size.
\begin{lemma}\label{simplification2}
    Given $3$-CNF $F$ and $(l_1 \vee l_2) \in \mathcal{B}(F)$,
    if $\exists C \in \mathcal{TB}(F, l_1 = 1)$ such that $l_2 \in C$,
    then $F$ is satisfiable iff $F\backslash C^F \cup C$ is satisfiable.
\end{lemma}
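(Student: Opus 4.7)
My plan is to split the biconditional into the routine ``new implies old'' direction and the substantive ``old implies new'' direction. Writing $C^F = (l_2 \vee q \vee l_3)$ and $C = (l_2 \vee q)$, the literal $l_3$ is the one that $\textsf{UP}(F \mid l_1 = 1)$ eventually forces to $0$; note that the variable of $l_3$ is distinct from the variables of $l_2$ and $q$, because $l_2$ and $q$ both still occur in $C$.

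For the forward direction, any assignment satisfying $(F \setminus \{C^F\}) \cup \{C\}$ satisfies $C$ and hence $C^F$, so it satisfies $F$ as well. The substantive direction is the converse. Let $\beta \models F$. I would argue by contradiction: suppose $\beta$ does not satisfy $C$, so $\beta(l_2) = \beta(q) = 0$. The $2$-clause $(l_1 \vee l_2) \in F$ then forces $\beta(l_1) = 1$. At this point I will invoke the standard fact that Unit Propagation preserves the full solution set of $F \mid l_1 = 1$: any literal forced to $0$ during $\textsf{UP}(F \mid l_1 = 1)$ must be $0$ in every satisfying assignment that extends $\{l_1 = 1\}$. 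Since $l_3$ is precisely such a forced literal, $\beta(l_3) = 0$. But then $\beta$ falsifies $C^F \in F$, contradicting $\beta \models F$. Hence $\beta$ satisfies $C$, and therefore $\beta \models (F \setminus \{C^F\}) \cup \{C\}$.

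The only step requiring any care is the appeal to Unit Propagation preserving satisfying assignments, which I will simply cite as the standard fact already used earlier in the paper. The key structural observation making the argument work is that the $2$-clause $(l_1 \vee l_2)$ lives in $F$ itself; that is exactly what converts the hypothetical failure of $C$ into the inference $\beta(l_1) = 1$, which then unleashes UP's consequence $\beta(l_3) = 0$. Without this $2$-clause, one would need a genuine autark-style modification of $\beta$, but here the $2$-clause short-circuits the whole argument.
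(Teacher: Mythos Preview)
Your argument is correct and tracks the paper's own proof closely. The paper does a direct case split on whether the satisfying assignment $\alpha$ has $\alpha(l_1)=1$ or $\alpha(l_1)=0$: in the first case it notes that $C \in \textsf{UP}(F\mid l_1=1)$ forces $\alpha \models C$, and in the second case the $2$-clause $(l_1 \vee l_2)$ gives $\alpha(l_2)=1$ and hence $\alpha \models C$. Your contradiction argument is simply the contrapositive of this same two-case reasoning, pivoting on $\beta(l_2)$ rather than $\beta(l_1)$; the essential ingredients (the $2$-clause $(l_1\vee l_2)$ plus soundness of unit propagation) are identical.

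One minor remark: you write $C^F = (l_2 \vee q \vee l_3)$ with $l_3$ the literal ``forced to $0$ by $\textsf{UP}$''. Strictly speaking, the eliminated literal could be $\bar{l}_1$ itself (removed immediately by the assignment $l_1=1$ rather than by a subsequent propagation step). Your argument still goes through verbatim in that case, since $\beta(l_1)=1$ directly gives $\beta(\bar{l}_1)=0$; just be aware that ``forced by UP'' should be read broadly to include the initial assignment.
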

\begin{proof}
    Clearly $F$ is satisfiable if $F\backslash C^F \cup C$ is.
    Suppose $C = l_2 \vee l_3$ and let $\alpha$ be a satisfying assignment of $F$.
    If $\alpha(l_1) = 1$, then $\textsf{UP}(F | l_1 = 1)$ is satisfiable, thus $F\backslash C^F \cup C$ is also satisfiable since $C \in \textsf{UP}(F | l_1 = 1)$.
    Else if $\alpha(l_1) = 0$, then $\alpha(l_2) = 1$ due to $l_1 \vee l_2$, so $\alpha$ satisfies $C$ and the conclusion follows.
\end{proof}

As a result, $3$-CNF $F$ can be simplified by the following polynomial-time \emph{procedure} $\mathcal{P}$:
for every $(l_1 \vee l_2) \in \mathcal{B}(F)$, if $l_1 = 1$ or $l_2 = 1$ is an autark, then apply Lemma~\ref{autark_lem} to simplify $F$; else apply Lemma~\ref{simplification2} to simplify $F$ if possible.


\begin{lemma}\label{simplify_lem}
    After running $\mathcal{P}$ on $3$-CNF $F$, for any $(l_1 \vee l_2) \in \mathcal{B}(F)$ and for any $2$-clause $C \in \mathcal{TB}(F, l_1 = 1)$, it must be $l_2 \notin C$. This also holds when switching $l_1$ and $l_2$.
\end{lemma}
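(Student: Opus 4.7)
The plan is to prove the lemma by a direct termination/contradiction argument that reads the statement as a restatement of ``no rule of $\mathcal{P}$ is applicable.'' First I would make explicit that $\mathcal{P}$ is to be understood as iterating until a fixed point, so that the phrase ``after running $\mathcal{P}$ on $F$'' means $F$ is a state on which neither Lemma~\ref{autark_lem} nor Lemma~\ref{simplification2} is applicable to any $(l_1 \vee l_2) \in \mathcal{B}(F)$. For this to be meaningful I would first record that $\mathcal{P}$ actually halts: each application of Lemma~\ref{autark_lem} removes at least one clause (those satisfied by the autark $\alpha$), while each application of Lemma~\ref{simplification2} replaces a $3$-clause $C^F$ by its proper $2$-subclause $C$, so in both cases the total literal count of $F$ strictly decreases. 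Hence $\mathcal{P}$ terminates in polynomial time and the notion of fixed point is well defined.

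Now I would argue by contradiction. Assume that after $\mathcal{P}$ terminates there exist $(l_1 \vee l_2) \in \mathcal{B}(F)$ and $C \in \mathcal{TB}(F, l_1 = 1)$ with $l_2 \in C$. Since $\mathcal{TB}(F, l_1 = 1)$ is nonempty, $l_1 = 1$ is not an autark in the sense of Lemma~\ref{autark_lem}. If $l_2 = 1$ is also not an autark, $\mathcal{P}$ reaches the ``else'' branch on the 2-clause $(l_1 \vee l_2)$, and the hypothesis of Lemma~\ref{simplification2} is met by our $C$, so $\mathcal{P}$ would replace $C^F$ by $C$; this contradicts termination. If instead $l_2 = 1$ happens to be an autark, $\mathcal{P}$ would first trigger Lemma~\ref{autark_lem} and continue iterating, again contradicting that $F$ is a fixed point. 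Either way we obtain a contradiction, establishing $l_2 \notin C$ for every $C \in \mathcal{TB}(F, l_1 = 1)$. The switched statement is immediate by symmetry because $\mathcal{P}$ scans every $2$-clause and would equally have considered the literal $l_2$ in the role of $l_1$.

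The main obstacle, if one exists at all, is purely conceptual: one must read $\mathcal{P}$ as a fixed-point procedure rather than a single pass and verify that the simplifications do not ``undo'' each other in a way that prevents convergence. The monotone decrease of the literal count rules this out, so the proof is essentially a direct unfolding of the definitions of $\mathcal{P}$, $\mathcal{TB}$, and autark, with no nontrivial calculation required.
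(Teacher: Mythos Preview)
Your proposal is correct and follows essentially the same approach as the paper: both argue that if the conclusion failed, one of the two simplification rules defining $\mathcal{P}$ would still be applicable, contradicting that $\mathcal{P}$ has terminated. Your write-up is simply more explicit than the paper's two-line proof, adding the termination argument for $\mathcal{P}$ and the case split on whether $l_2=1$ is itself an autark, both of which the paper leaves implicit.
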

\begin{proof}
    If $\mathcal{TB}(F, l_1 = 1) = \emptyset$, then $l_1 = 1$ is an autark and $F$ can be simplified by Lemma~\ref{autark_lem}. If $C \in \mathcal{TB}(F, l_1 = 1)$ and $l_2 \in C$, then $F$ can be simplified by Lemma~\ref{simplification2}.
\end{proof}

\subsubsection{Clause Choosing Rule}\label{CCR}

Now we present our clause choosing \emph{rule} $\Upsilon$.
By Lemma~\ref{autark_lem} we can always begin with branching on a $2$-clause with a cost of factor $2$ in the upper bound: Choose an arbitrary literal in any $3$-clause and branch on its two assignments $\{0, 1\}$. This will result in a new $2$-clause otherwise it is an autark and we fix it and continue to choose another literal.

Now let us show the overlapping cases between the current branching clause to the next branching clause.
Let $C_0$ be the branching clause in the father node where $C_0^{\mathcal{F}} = l_0 \vee l_1 \vee l_2$, and let $F_0$ be the formula in the father node.
The \emph{rule} $\Upsilon$ works as follows:
if $\alpha_{C_0}(l_1) = 1$, choose arbitrary $C_1 \in \mathcal{TB}(F_0, l_1 = 1)$;
else if $\alpha_{C_0}(l_2) = 1$, choose arbitrary $C_1 \in \mathcal{TB}(F_0, l_2 = 1)$.


We only discuss the case $\alpha_{C_0}(l_1) = 1$ due to symmetry. We enumerate all the possible forms of $C_1^{\mathcal{F}}$ by discussing what literal is eliminated followed by whether $l_2$ or $\bar{l}_2$ is contained:
\begin{enumerate}
    \item $C_1^{\mathcal{F}} \backslash C_1 = l_3$. $C_1$ becomes a $2$-clause due to elimination of $l_3$. There are three cases: (\romannumeral1) $C_1 = l_2 \vee l_4$, (\romannumeral2) $C_1 = \bar{l}_2 \vee l_4$ or (\romannumeral3) $C_1 = l_4 \vee l_5$. \label{case1}
    \item $C_1^{\mathcal{F}} \backslash C_1 = \bar{l}_1$. $C_1$ becomes a $2$-clause due to elimination of $\bar{l}_1$. There are three cases: (\romannumeral1) $C_1 = l_2 \vee l_3$, (\romannumeral2) $C_1 = \bar{l}_2 \vee l_3$ or (\romannumeral3) $C_1 = l_3 \vee l_4$. \label{case2}
    \item $C_1^{\mathcal{F}} \backslash C_1 = l_2$. This means $l_1 = 1 \Rightarrow l_2 = 0$, and $\alpha_{C_0}(l_1 l_2) = 11$ can be excluded. \label{case3}
    \item $C_1^{\mathcal{F}} \backslash C_1 = \bar{l}_2$. This means $l_1 = 1 \Rightarrow l_2 = 1$, and $\alpha_{C_0}(l_1 l_2) = 10$ can be excluded. \label{case4}
\end{enumerate}

Both Case~\ref{case1}.(\romannumeral1) and Case~\ref{case2}.(\romannumeral1) are impossible due to Lemma~\ref{simplify_lem}.
To sum up, we immediately have the following by merging similar cases with branch number bounded from above:
\begin{itemize}
    \item Case~\ref{case1}.(\romannumeral3): it takes at most $3$ branches in the father node to get $l_3 \vee l_4 \vee l_5$.
    \item Case~\ref{case1}.(\romannumeral2), Case~\ref{case2}.(\romannumeral3) and Case~\ref{case4}: it takes at most $3$ branches in the father node to get $\bar{l}_1 \vee l_3 \vee l_4$ or $\bar{l}_2 \vee l_3 \vee l_4$.
    \item Case~\ref{case3}: it takes at most $2$ branches in the father node to get $l_2 \vee l_3 \vee l_4$.
    \item Case~\ref{case2}.(\romannumeral2): it takes at most $3$ branches in the father node to get $\bar{l}_1 \vee \bar{l}_2 \vee l_3$.
\end{itemize}

To fit \emph{rule} $\Upsilon$, there must be at least one literal assigned to $1$ in the branching clause.
Except Case~\ref{case2}.(\romannumeral2), we get a $2$-clause $C_1$,
and \emph{rule} $\Upsilon$ still applies.

Now consider the case $C_1^{\mathcal{F}} = \bar{l}_1 \vee \bar{l}_2 \vee l_3$. If $\alpha(l_1 l_2) = 11$, we have $C_1^F = l_3$, otherwise we have $C_1^F = 1 \vee l_3$. In other words, the assignment satisfying $C_0 \wedge C_1$ should be $\alpha(l_1 l_2 l_3) \in \{ 010, 100, 011, 101, 111 \}$. Note that $\alpha(l_3) = 0$ in the first two assignments, which does not fit \emph{rule} $\Upsilon$. In this case, we do the following:
Choose an arbitrary literal in any $3$-clause and branch on its two assignments $\{0, 1\}$. Continue this process we will eventually get a new $2$-clause (Lemma~\ref{autark_lem}). Now the first two assignments $\alpha(l_1 l_2 l_3) \in \{ 010, 100\}$ has $4$ branches because of the new branched literal, and we have that all $7$ branches fit \emph{rule} $\Upsilon$ because either $l_3 = 1$ or there is a new $2$-clause. Our key observation is the following:
These $7$ branches correspond to all satisfying assignments of $C_0 \wedge C_1$, which can be amortized to think that $C_1$ has $3$ branches and $C_0$ has $7/3$ branches.
\footnote{\label{note1}Without which or Lemma~\ref{simplify_lem} would ruin our worst-case upper bound, see Appendix~\ref{DD_section}.}
As a conclusion, we modify the last case to be:
\begin{itemize}
    \item Case~\ref{case2}.(\romannumeral2): it takes at most $7/3$ branches in the father node to get $\bar{l}_1 \vee \bar{l}_2 \vee l_3$.
\end{itemize}

\subsubsection{Transformation from Clause Sequence to Instance}\label{cs_to_i}

We show how to transform a clause sequence $\mathcal{C}$ to an instance, then take a symbolic detour to better formalize the cost of generating chains, i.e., the running time of \textsf{BR}.

Similar to above, let $C_1$ be the clause chosen by \emph{rule} $\Upsilon$ and let $C_0$ be the branching clause in the father node, moreover let $C$ be the branching clause in the grandfather node. In other words, $C_1, C_0, C$ are the last three clauses in $\mathcal{C}$. $C_1$ used to be a $3$-clause in the father node since $C_1 \in \mathcal{T}(F)$, thus $C_1$ is independent with $C$ because all literals in $C$ are assigned to some values in $F$, so $C_1$ can only overlap with $C_0$. Therefore, clauses in $\mathcal{C}$ can only (but not necessarily) overlap with the clauses next to them.

By the case discussion in \S{\ref{CCR}}, there are only $4$ overlapping cases between $C_0$ and $C_1$, which we call \emph{independent} for $\langle l_0 \vee l_1 \vee l_2, ~l_3 \vee l_4 \vee l_5 \rangle$, \emph{negative} for $\langle l_0 \vee l_1 \vee l_2, ~\bar{l}_1 \vee l_3 \vee l_4 \rangle$ or $\langle l_0 \vee l_1 \vee l_2, ~\bar{l}_2 \vee l_3 \vee l_4 \rangle$, \emph{positive} for $\langle l_0 \vee l_1 \vee l_2, ~l_2 \vee l_3 \vee l_4 \rangle$ and \emph{two-negative} for $\langle l_0 \vee l_1 \vee l_2, ~\bar{l}_1 \vee \bar{l}_2 \vee l_3 \rangle$.
There is a natural mapping from clause sequence to a string.

\begin{definition}\label{to_string}
    Let $\mathcal{C}$ be a clause sequence,
    define function $\zeta: \mathcal{C} \mapsto \Gamma^{|\mathcal{C}|}$, where $\Gamma = \{\verb"*", \verb"n", \verb"p", \verb"t"\}$, satisfies that the $i$-th bit of $\zeta(\mathcal{C})$ is \verb"*" if $\mathcal{C}_i$ and $\mathcal{C}_{i+1}$ are independent, or \verb"n" if negative, or \verb"p" if positive, or \verb"t" if two-negative for all $i \in [|\mathcal{C}| - 1]$, and the $|\mathcal{C}|$-th bit of $\zeta(\mathcal{C})$ is \verb"*".
    A $\tau$-chain $\mathcal{S}$ is also a clause sequence of length $\tau$, so $\zeta$ maps $\mathcal{S}$ to $\Gamma^{\tau}$.
    Two chains $\mathcal{S}_1$ and $\mathcal{S}_2$ are \emph{isomorphic} if $\zeta(\mathcal{S}_1) = \zeta(\mathcal{S}_2)$.
\end{definition}

Then the \emph{transformation} from $\mathcal{C}$ to $\mathcal{I}$ naturally follows:
Partition $\zeta(\mathcal{C})$ by \verb"*", then every substring corresponds to a chain, just add this chain to $\mathcal{I}$.
Now we can formalize the cost.

\begin{lemma}\label{branch_upper_bound_1}
    Given $3$-CNF $\mathcal{F}$, let $\mathcal{C}$ be the clause sequence in time $T$ of running \textsf{BR}$(\mathcal{F}, \emptyset)$, it must be $T \le O^*(2^{\kappa_1} \cdot 3^{\kappa_2} \cdot (7/3)^{\kappa_3})$, where $\kappa_1$ is the number of \verb"p" in $\zeta(\mathcal{C})$, $\kappa_2$ is the number of \verb"*" and \verb"n" in $\zeta(\mathcal{C})$, and $\kappa_3$ is the number of \verb"t" in $\zeta(\mathcal{C})$.
\end{lemma}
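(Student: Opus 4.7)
The plan is to bound $T$ by the size of the search tree of \textsf{BR} times polynomial work per node, and then bound the tree size by multiplying branching factors along the path $\mathcal{C}$. Because both the simplification procedure $\mathcal{P}$ and unit propagation run in polynomial time, every internal node of the search tree does $\text{poly}(n)$ work outside of its recursive calls. Thus $T = O^*(\#\text{leaves} \cdot \text{depth})$, and it suffices to bound the number of leaves by $\prod_i w_i$, where $w_i$ is the branching factor at the level branching on $\mathcal{C}_i$.

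Second, I will bound $w_i$ using the case analysis of \S{\ref{CCR}}. The key point is that the rule $\Upsilon$ chooses $\mathcal{C}_{i+1}$ from $\mathcal{TB}(F, l = 1)$ for some literal $l$ fixed to $1$ inside $\mathcal{C}_i$, so the form of $\mathcal{C}_{i+1}$ relative to $\mathcal{C}_i$ falls into exactly one of four overlap patterns recorded by the $i$-th symbol of $\zeta(\mathcal{C})$. Combining that analysis with Lemma~\ref{simplify_lem} (which rules out the sub-cases forbidden by $\mathcal{P}$), I obtain $w_i \le 3$ when $\zeta(\mathcal{C})_i \in \{\verb"*", \verb"n"\}$ (Cases 1.(iii), 1.(ii), 2.(iii), 4), $w_i \le 2$ when $\zeta(\mathcal{C})_i = \verb"p"$ (Case 3), and an amortized $w_i = 7/3$ when $\zeta(\mathcal{C})_i = \verb"t"$ (Case 2.(ii)). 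For the last clause, where $\zeta(\mathcal{C})$ ends in \verb"*" by convention, the factor $3$ is a valid upper bound because the initial $2$-clause preprocessing together with the rule $\Upsilon$ yields at most three surviving branches at the final level.

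Multiplying these per-position factors over all $|\mathcal{C}|$ positions in $\zeta(\mathcal{C})$ yields the claimed bound $T \le O^*(2^{\kappa_1} \cdot 3^{\kappa_2} \cdot (7/3)^{\kappa_3})$.

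The main technical obstacle is making the amortization for the \verb"t" case rigorous and compatible with the per-position accounting. In Case 2.(ii) the $7$ branches span two consecutive levels (those of $\mathcal{C}_i$ and $\mathcal{C}_{i+1}$), and we distribute them as factor $7/3$ at position $i$ and factor $3$ at position $i+1$. Since $3$ is a uniform upper bound on the branching factor dictated by any symbol, even if $\zeta(\mathcal{C})_{i+1}$ dictates a smaller factor (e.g., $2$ for \verb"p" or $7/3$ for a subsequent \verb"t"), the product $(7/3) \cdot w_{i+1} \le (7/3) \cdot 3 = 7$ remains a valid bound for the combined subtree rooted at $\mathcal{C}_i$. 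Consequently, the multiplicative per-symbol accounting in the lemma's expression is an honest upper bound, completing the proof.
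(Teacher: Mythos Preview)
Your outline is the same as the paper's one-line proof: read off per-position branching factors from the case analysis in \S\ref{CCR} and multiply them along the path $\mathcal{C}$. That part is fine and is exactly what the paper intends.

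The gap is in your treatment of the \texttt{t} case. You write that ``the product $(7/3)\cdot w_{i+1}\le (7/3)\cdot 3 = 7$ remains a valid bound for the combined subtree,'' but this inequality points the wrong way. In Case~2.(ii) the combined branching on $C_i$ and $C_{i+1}$ produces exactly $7$ children; for the lemma's per-symbol product to be an \emph{upper} bound you need $(7/3)\cdot w_{i+1}\ge 7$, i.e.\ $w_{i+1}\ge 3$. If $\zeta(\mathcal{C})_{i+1}$ were \texttt{p} (factor $2$) or \texttt{t} (factor $7/3$), the product would be $14/3$ or $49/9$, strictly less than $7$, and the accounting would undercount rather than overcount. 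So your ``even if'' clause does not rescue the argument; it breaks it.

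What actually makes the amortization sound is a structural property of the algorithm, spelled out in Appendix~\ref{3sat_values}: among the seven branches, the three with $l_3=1$ force the next clause (chosen from $\mathcal{TB}(\cdot,l_3=1)$) to be independent or negative relative to $C_{i+1}$, and the remaining four fix all of $V(C_{i+1})$ and split a fresh literal, forcing the next clause to be independent. Consequently the substrings \texttt{tp} and \texttt{tt} never occur in $\zeta(\mathcal{C})$; the symbol at position $i{+}1$ after a \texttt{t} is always \texttt{*} or \texttt{n}, hence $w_{i+1}=3$ and $(7/3)\cdot 3=7$ on the nose. Replace your backwards inequality with this observation and the amortization goes through.
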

\begin{proof}
    By Definition~\ref{to_string} and case discussion in \S{\ref{CCR}}, the conclusion follows.
\end{proof}

\subsubsection{Termination Condition}\label{tc_type_def}

We show how the cost of generating chains implies the termination \emph{condition} $\Phi$.
We map every chain to an integer as the \emph{type} of the chain such that isomorphic chains have the same type.
Formally, let $\mathcal{I}(\mathcal{C})$ be the instance transformed from $\mathcal{C}$, and let $\Sigma = \{ \zeta(\mathcal{S}) | \mathcal{S} \in \mathcal{I}(\mathcal{C}) \}$ be the set of distinct strings with no repetition.
Define bijective function $g: \Sigma \mapsto [\theta]$ that maps each string $\zeta(\mathcal{S})$ in $\Sigma$ to a distinct integer as the \emph{type} of chain $\mathcal{S}$,
where $\theta = |\Sigma|$ is the number of types of chain in $\mathcal{C}$ and $g$ can be arbitrary fixed bijection.
Define \emph{branch number} $b_i$ of type-$i$ chain $\mathcal{S}$ as $b_i = 2^{\kappa_1} \cdot 3^{\kappa_2} \cdot (7 / 3)^{\kappa_3}$, where $\kappa_1$ is the number of \verb"p" in $\zeta(\mathcal{S})$, $\kappa_2$ is the number of \verb"*" and \verb"n" in $\zeta(\mathcal{S})$, and $\kappa_3$ is the number of \verb"t" in $\zeta(\mathcal{S})$.
Also define the \emph{chain vector} $\vec{\nu}
\in \mathbb{Z}^{\theta}$ for $\mathcal{I}(\mathcal{C})$ satisfies $\nu_i = \left|\{\mathcal{S} \in \mathcal{I}(\mathcal{C}) | (g \circ \zeta)(\mathcal{S}) = i \}\right|$ for all $i \in [\theta]$, i.e., $\nu_i$ is the number of type-$i$ chains in $\mathcal{I}(\mathcal{C})$. We can rewrite Lemma~\ref{branch_upper_bound_1} as the following.
\begin{corollary}\label{branch_upper_bound_2}
    Given $3$-CNF $\mathcal{F}$, let $\mathcal{I}$ be the instance in time $T$ of running \textsf{BR}$(\mathcal{F}, \emptyset)$, it must be $T \le T_{\textsf{BR}} = O^*(\prod_{i \in [\theta]} b_i^{\nu_i})$, where $b_i$ is the branch number of type-$i$ chain and $\vec{\nu}$ is the chain vector for $\mathcal{I}$.
\end{corollary}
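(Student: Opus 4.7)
The plan is to view Corollary~\ref{branch_upper_bound_2} as a bookkeeping reorganization of Lemma~\ref{branch_upper_bound_1}: the global counts $\kappa_1,\kappa_2,\kappa_3$ of symbols in $\zeta(\mathcal{C})$ should be re-expressed as a sum of per-chain counts, and then grouped by chain type.

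First I would invoke the \emph{transformation} from \S\ref{cs_to_i}: partitioning $\zeta(\mathcal{C})$ at each \verb"*" yields a collection of substrings that are exactly $\{\zeta(\mathcal{S}) : \mathcal{S}\in\mathcal{I}(\mathcal{C})\}$, because the last symbol of every chain (by Definition~\ref{to_string}) is \verb"*", and consecutive chains in $\mathcal{I}(\mathcal{C})$ are mutually independent. Consequently the total number of occurrences of each symbol in $\zeta(\mathcal{C})$ decomposes as the sum over chains: writing $\kappa_j(\mathcal{S})$ for the per-chain count, one has $\kappa_j=\sum_{\mathcal{S}\in\mathcal{I}(\mathcal{C})}\kappa_j(\mathcal{S})$ for $j\in\{1,2,3\}$.

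Next I would use the definition of branch number in \S\ref{tc_type_def}: for a type-$i$ chain $\mathcal{S}$, $b_i=2^{\kappa_1(\mathcal{S})}\cdot 3^{\kappa_2(\mathcal{S})}\cdot (7/3)^{\kappa_3(\mathcal{S})}$. Since isomorphic chains have identical $\zeta$-images and hence identical per-symbol counts, $b_i$ is well-defined on types. Multiplying over all chains and collecting by type, the product telescopes:
\[
\prod_{\mathcal{S}\in\mathcal{I}(\mathcal{C})} 2^{\kappa_1(\mathcal{S})}\cdot 3^{\kappa_2(\mathcal{S})}\cdot (7/3)^{\kappa_3(\mathcal{S})} \;=\; \prod_{i\in[\theta]} b_i^{\nu_i},
\]
where $\nu_i$ is, by definition of the chain vector, the number of type-$i$ chains in $\mathcal{I}(\mathcal{C})$. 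Combined with the decomposition of $\kappa_j$, this gives $2^{\kappa_1}\cdot 3^{\kappa_2}\cdot (7/3)^{\kappa_3}=\prod_{i\in[\theta]} b_i^{\nu_i}$, and substituting into Lemma~\ref{branch_upper_bound_1} yields the claimed bound.

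There is no genuine technical obstacle here; the statement is a pure regrouping of exponents. The only subtle point to verify carefully is that the partition of $\zeta(\mathcal{C})$ by \verb"*" coincides with the chain decomposition used by the transformation, so that no symbol is double-counted and no boundary \verb"*" is lost between adjacent chains. Once that is pinned down (by appealing to Definition~\ref{to_string} together with the last-symbol convention making every chain's $\zeta$-image end in \verb"*"), the corollary falls out immediately.
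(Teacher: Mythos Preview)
Your proposal is correct and matches the paper's approach: the paper presents Corollary~\ref{branch_upper_bound_2} as an immediate rewriting of Lemma~\ref{branch_upper_bound_1} with no explicit proof, and your argument spells out precisely the regrouping of exponents by chain type that the paper leaves implicit. The only thing the paper adds beyond what you wrote is nothing at all---it simply states the corollary after defining branch number and chain vector.
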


To achieve worst-case upper bound $\mathcal{O}(c^n)$ for solving $3$-SAT, we must have $T_{\textsf{BR}} \le \mathcal{O}(c^n)$, which is $\prod_{i=1}^{\theta} b_i^{\nu_i} \le c^n$. This immediately gives us the termination \emph{condition} $\Phi$:
$(\sum_{i \in [\theta]} \nu_i \cdot \log b_i) / \log c > n$.


Therefore, we can hardwire such condition into the algorithm to achieve the desired upper bound, as calculated in the next subsection.

\subsection{Combination of Two Algorithms}\label{comb}

By combining \textsf{BR} and \textsf{DLS} as in Algorithm~\ref{Framework}, we have that the worst-case upper bound $\mathcal{O}(c^n)$ is attained when $T_{\textsf{BR}} = T_{\textsf{DLS}}$, which is:
\begin{equation}
    c^n  = \prod_{i \in [\theta]} b_i^{\nu_i} = (\frac{4}{3})^{n'} \cdot \prod_{i \in [\theta]} {\lambda_i}^{-\nu_i} , \label{equal1}
\end{equation}
followed by Corollary~\ref{branch_upper_bound_2} and Lemma~\ref{dls_upper_bound}.
Let $\eta_i$ be the number of variables in a type-$i$ chain for all $i \in [\theta]$, we have that $n' = n - |V(\mathcal{I})| = n - \sum_{i \in [\theta]} \eta_i \nu_i$. Taking the logarithm and divided by $n$, (\ref{equal1}) becomes:
\begin{equation}
    \log c = \sum_{i \in [\theta]} \frac{\nu_i}{n} \log{b_i} = \log{\frac{4}{3}} - \sum_{i \in [\theta]} \frac{\nu_i}{n} (\eta_i \log{\frac{4}{3}} + \log{\lambda_i}) \label{equal2} .
\end{equation}
The second equation is a linear constraint over $\frac{1}{n} \cdot \vec{\nu}$, which gives that $\log c$ is maximized when $\nu_i = 0$ for all $i \neq \arg \max_{i \in [\theta]} \{ \log{b_i}  / (\log b_i + \eta_i \log{\frac{4}{3}} + \log{\lambda_i} ) \}$.

Based on the calculation of $\text{LP}_A$ (see Appendix \ref{3sat_values}), we show that chain $\mathcal{S}$ with $\zeta(\mathcal{S}) = \verb"*"$ (say, type-$1$ chain) corresponds to the maximum value above, namely:
\begin{equation*}
    \arg \max_{i \in [\theta]} \{ \log{b_i}  / (\log b_i + \eta_i \log{\frac{4}{3}} + \log{\lambda_i} ) \} = 1.
\end{equation*}
In other words, all chains in $\mathcal{I}$ are $1$-chain.
Substitute $\lambda_1 = 3/7, b_1 = 3, \eta_1 = 3$ and $\nu_i = 0$ for all $i \in [2, \theta]$ into (\ref{equal2}) (see Table~\ref{3sat_numerical} in Appendix~\ref{3sat_values}), we obtain our main result on $3$-SAT as follow.
\begin{theorem}\label{main2}
    There exists a deterministic algorithm for $3$-SAT that runs in time $\mathcal{O}(3^{n \log{\frac{4}{3}} / \log{\frac{64}{21}}})$.
\end{theorem}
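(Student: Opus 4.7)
The plan is to combine the two running-time bounds established earlier in the paper and then solve a simple one-dimensional optimization. By Algorithm~\ref{Framework}, running \textsf{BR} followed by \textsf{DLS} on a $3$-CNF $\mathcal{F}$ costs $T_{\textsf{BR}} + T_{\textsf{DLS}}$, where Corollary~\ref{branch_upper_bound_2} gives $T_{\textsf{BR}} = O^*(\prod_{i \in [\theta]} b_i^{\nu_i})$ and Lemma~\ref{dls_upper_bound} (specialized to $k=3$, so that $2(k-1)/k = 4/3$) gives $T_{\textsf{DLS}} = \mathcal{O}((4/3)^{n'} \prod_{i \in [\theta]} \lambda_i^{-\nu_i})$ with $n' = n - \sum_i \eta_i \nu_i$. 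Since the termination \emph{condition} $\Phi$ in \S{\ref{tc_type_def}} lets us tune the cutoff so that the branching factor in \textsf{BR} never exceeds a prescribed $c^n$, the worst case of the combined algorithm occurs at the balance point $T_{\textsf{BR}} = T_{\textsf{DLS}}$, which, after taking $\log$ and dividing by $n$, becomes
\begin{equation*}
    \log c \;=\; \sum_{i \in [\theta]} \frac{\nu_i}{n} \log b_i \;=\; \log\tfrac{4}{3} - \sum_{i \in [\theta]} \frac{\nu_i}{n} \bigl(\eta_i \log\tfrac{4}{3} + \log \lambda_i\bigr).
\end{equation*}

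Next, I will view the vector $(\nu_i/n)_{i \in [\theta]}$ as the free parameters of a linear program: the first equality fixes a linear functional of $\nu/n$, while the second expresses $\log c$ as an affine function of $\nu/n$ whose gradient has strictly positive components (by the form of $b_i, \eta_i, \lambda_i$). Consequently $\log c$ is maximized at a vertex of the feasible region, i.e., when exactly one coordinate $\nu_i$ is nonzero. Equivalently, the maximizing type is the one that maximizes the ratio $\log b_i \,/\, (\log b_i + \eta_i \log\tfrac{4}{3} + \log \lambda_i)$, so the analysis reduces to identifying the argmax over all chain types.

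The main obstacle is the last step: to show that the single-clause chain (type $1$, with $\zeta(\mathcal{S}) = \verb"*"$, $b_1 = 3$, $\eta_1 = 3$) actually achieves the maximum. To do so I will compute $\lambda_i$ for each chain isomorphism class by solving the defining $\text{LP}_A$ of Definition~\ref{charactoristic_def}; the relevant values are tabulated in Appendix~\ref{3sat_values} of the paper. For each type I then compare the ratio against the type-$1$ value $\log 3 / \log(3 \cdot (4/3)^3 \cdot 3/7) = \log 3 / \log(64/21)$, and verify (using those tabulated values) that no longer or more entangled chain does better, so the worst case is indeed $\nu_i = 0$ for $i \neq 1$.

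Finally I will substitute $\lambda_1 = 3/7$, $b_1 = 3$, $\eta_1 = 3$ into the balance equation. Setting $\nu_i = 0$ for $i \ge 2$ yields $\nu_1/n = \log(4/3) / \log(64/21)$, and hence
\begin{equation*}
    \log c \;=\; \frac{\nu_1}{n}\, \log 3 \;=\; \frac{\log 3 \cdot \log(4/3)}{\log(64/21)},
\end{equation*}
which gives $c = 3^{\log(4/3)/\log(64/21)}$ and establishes the claimed upper bound $\mathcal{O}\bigl(3^{\,n \log(4/3)/\log(64/21)}\bigr)$.
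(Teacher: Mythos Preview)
Your proposal is correct and mirrors the paper's own argument essentially step for step: balance $T_{\textsf{BR}}$ against $T_{\textsf{DLS}}$ via Corollary~\ref{branch_upper_bound_2} and Lemma~\ref{dls_upper_bound}, take logarithms to obtain the linear relation in $\vec{\nu}/n$, argue the worst case concentrates on a single chain type maximizing $\log b_i/(\log b_i+\eta_i\log\tfrac{4}{3}+\log\lambda_i)$, invoke the tabulated $\text{LP}_A$ values in Appendix~\ref{3sat_values} to identify that type as the $1$-chain, and substitute $b_1=3,\ \eta_1=3,\ \lambda_1=3/7$. The only cosmetic difference is that you explicitly solve for $\nu_1/n=\log(4/3)/\log(64/21)$ before reading off $c$, whereas the paper substitutes directly into (\ref{equal2}).
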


This immediately implies the upper bound $O(1.32793^n)$ for $3$-SAT in Table~\ref{table_result} of \S{\ref{intro}}.

\section{Conclusion and Discussion}\label{conclusion}

We have shown that how to improve Moser and Scheder's deterministic $k$-SAT algorithm by combining with a branching algorithm.
Specifically, for $3$-SAT we design a novel branching algorithm which reduces the branch number of $1$-chain from $7$ to $3$.
In general, we expect to see $1$-chain in $k$-CNF to have branch number $2^{k-1} - 1$ instead of $2^k - 1$, therefore improving the upper bound for Algorithm~\ref{br_k} from $\mathcal{O}((2^k - 1)^{|\mathcal{I}|} \cdot {c_{k-1}}^{n - k|\mathcal{I}|})$
to
$\mathcal{O}((2^{k-1} - 1)^{|\mathcal{I}|} \cdot {c_{k-1}}^{n - k|\mathcal{I}|})$
as for Algorithm~\ref{BR_alg}.
However, this requires much more work using the techniques developed in this paper.

We believe that there exists an elegant proof for the analysis of branching algorithm on $k$-SAT instead of tedious case analysis, and it is tight under the current framework, i.e., the combination of a branching algorithm and the derandomized local search, leveraged by chain.

In a recent work, the technique in this paper is generalized to give an improved deterministic algorithm for NAE-$k$-SAT \cite{liu2018curse}, which achieves upper bound that is better than $k$-SAT algorithms for the first time.

\paragraph{Acknowledgements.}{The author wants to thank Yuping Luo, S. Matthew Weinberg and Periklis A. Papa-konstantinou for helpful discussions.
Research at Princeton University partially supported by an innovation research grant from Princeton and a gift from Microsoft.}


\bibliographystyle{alpha}
\bibliography{det_ksat}

\appendix

\section{Generation of All Types of Chain for 3-SAT}\label{3sat_values}

In \S{\ref{cs_to_i}}, we proved that there are only $4$ overlapping cases between successive clauses, thus for any $\mathcal{S} \in \mathcal{I}$, $\zeta(\mathcal{S}) \in \{\verb"n", \verb"p", \verb"t"\}^* \uplus \{\verb"*"\}$.
Now we show that any $\zeta(\mathcal{C})$ cannot have substring \verb"tp" or \verb"tt", which greatly reduces the number of types of chain. Recall that in \S{\ref{CCR}}, if $C_1^{\mathcal{F}} = \bar{l}_1 \vee \bar{l}_2 \vee l_3$, then $C_0 \wedge C_1$ has at most $7$ branches, where $4$ of them correspond to fixing all variables in $C_1$ and branching on a new literal. These necessarily lead to a branching clause independent with $C_1^{\mathcal{F}}$.
Also note that when $\alpha(l_3) = 1$ for the remaining $3$ branches, the next branching clause cannot have $l_3$, otherwise it is eliminated.
As a result, the clause in $\mathcal{C}$ right after $C_1$ can only be independent or negative overlapping with $C_1$, which means \verb"tp" or \verb"tt" is not a substring of $\zeta(\mathcal{C})$.

Then we prove that there are only finite types of chain. It is sufficient to prove that the length of chain is upper bounded by a constant. When choosing clause to branch, we can always choose a literal in some $3$-clause and branch on this literal to get a new $2$-clause (Lemma~\ref{autark_lem}). This costs us a factor of $2$ in the branch number but results in an independent branching clause.
Observe that as long as the new branch number ${b_i}' = 2 b_i$ satisfies $\log{{b_i}'}  / (\log {b_i}' + \eta_i \log{\frac{4}{3}} + \log{\lambda_i} ) \le \log{b_1}  / (\log b_1 + \eta_1 \log{\frac{4}{3}} + \log{\lambda_1} )$, it does not influence the worst-case upper bound.

To sum up, a string $\zeta$ corresponds to a type-$i$ chain can be generated by the following two rules.
The second rule can be applied whenever at our will.

\begin{enumerate}
    \item $\zeta \leftarrow (\zeta \uplus \verb"*")$ or $(\zeta \uplus \verb"n")$ or $(\zeta \uplus \verb"p")$ or $(\zeta \uplus \verb"t*")$ or $(\zeta \uplus \verb"tn")$.
    \item $\zeta \leftarrow (\zeta \uplus \verb"*")$ if $\log{(2{b_i})}  / (\log {(2b_i)} + \eta_i \log{\frac{4}{3}} + \log{\lambda_i} ) \le \log{b_1}  / (\log b_1 + \eta_1 \log{\frac{4}{3}} + \log{\lambda_1} )$.
\end{enumerate}

We report all chains $\mathcal{S}_i$ of type-$i$ with characteristic value $\lambda_i$ and $f_i = \log{b_i}  / (\log b_i + \eta_i \log{\frac{4}{3}} + \log{\lambda_i})$.
The characteristic values are given by solving linear programming $\text{LP}_A$ from Definition~\ref{charactoristic_def}.
The variable number $\eta_i = |V(\mathcal{S}_i)|$, branch number $b_i$ are trivial to calculate, thus do not report here.
Note that the reversed (except the terminal \verb"*") string is equivalent to the original one.
Chain generated by rule 2 is marked with \textbf{r2} in their type.
Using a breath-first search, one can easily check that Table~\ref{3sat_numerical} lists all the possible types of chains in our branching algorithm for $3$-SAT.

\begin{table}
\centering
\caption{The characteristic value $\lambda_i$ and $f_i = \log{b_i}  / (\log b_i + \eta_i \log{\frac{4}{3}} + \log{\lambda_i})$ for all types $i$ of possible chains generated by our branching algorithm for $3$-SAT.}
\label{3sat_numerical}
\begin{minipage}{0.45\textwidth}

\begin{tabular}{|l|l|l|l|}
\hline
type-$i$ & $\zeta(\mathcal{S}_i)$  & $\lambda_i$ & $f_i$ \\\hline

1 & \verb"*" & $3/7$ & $0.98586\dots$ \\\hline

2 & \verb"n*" & $27/110$ & $0.984\dots$  \\\hline

3 & \verb"p*" & $81/331$ & $0.983\dots$ \\\hline

4 & \verb"t*" & $15/46$  & $0.984\dots$ \\\hline

5 & \verb"nn*" & $9/64$  & $0.984\dots$ \\\hline

6 & \verb"np*" & $81/578$  & $0.983\dots$ \\\hline

7 & \verb"nt*" & $45/241$  & $0.984\dots$ \\\hline


8 \textbf{r2}  & \verb"pp*" & $243/1739$  & $0.98580\dots$ \\\hline

9 & \verb"pt*" & $27/145$ & $0.983\dots$  \\\hline


10 & \verb"nnn*" &  $243 / 3016$  & $0.984\dots$ \\\hline

11 & \verb"nnp*" & $729/9080$  & $0.983\dots$ \\\hline

12 & \verb"nnt*" & $135/1262$  & $0.984\dots$ \\\hline

13 & \verb"npn*" & $243/3028$  & $0.983\dots$ \\\hline

14 \textbf{r2}  & \verb"npp*" & $729/9110$  & $0.9853\dots$ \\\hline

15 & \verb"npt*" & $45/422$  & $0.983\dots$ \\\hline


16 & \verb"ntn*" & $405/3788$  & $0.984\dots$ \\\hline

17 \textbf{r2}  & \verb"pnp*" & $2187/27334$  & $0.9853\dots$ \\\hline

18 & \verb"pnt*" & $405/3799$  & $0.983\dots$ \\\hline



19 & \verb"tnt*" & $25/176$  & $0.984\dots$ \\\hline

\end{tabular}
\end{minipage}
\hfil
\begin{minipage}{0.45\textwidth}

\begin{tabular}{|l|l|l|l|}
\hline

type-$i$ & $\zeta(\mathcal{S}_i)$ & $\lambda_i$ & $f_i$ \\\hline

20 \textbf{r2}  & \verb"nnnn*" & $243/5264$  & $0.98583\dots$ \\\hline

21 \textbf{r2}  & \verb"nnnp*" & $729/15848$  & $0.9854\dots$ \\\hline

22 & \verb"nnnt*" & $405/6608$  & $0.984\dots$ \\\hline

23  \textbf{r2} & \verb"nnpn*" & $729/15856$  & $0.9855\dots$ \\\hline

24 \textbf{r2}  & \verb"nnpp*" & $2187/47704$  & $0.984\dots$ \\\hline

25 \textbf{r2}  & \verb"nnpt*" & $1215/19888$  & $0.9854\dots$ \\\hline


26 \textbf{r2}  & \verb"npnp*" & $2187/47732$  & $0.9850\dots$ \\\hline

27 \textbf{r2}  & \verb"npnt*" & $405/6634$  & $0.9856\dots$ \\\hline

28 & \verb"ntnn*" & $135/2204$  & $0.984\dots$ \\\hline

29 \textbf{r2}  & \verb"ntnp*" & $1215/19904$  & $0.9856\dots$ \\\hline

30 & \verb"ntnt*" & $675/8299$  & $0.984\dots$ \\\hline


31 \textbf{r2}  & \verb"pnnp*" & $729/15904$  & $0.9850\dots$ \\\hline

32 \textbf{r2}  & \verb"pnnt*" & $1215/19894$  & $0.9855\dots$ \\\hline



33 & \verb"tnnt*" & $45/553$  & $0.984\dots$ \\\hline


34 \textbf{r2}  & \verb"tnpp*" & $405/6653$  & $0.9850\dots$ \\\hline

35 \textbf{r2}  & \verb"tnpt*" & $675/8321$ & $0.9855\dots$  \\\hline

36 \textbf{r2} & \verb"tnnnn*" & $243/6920$  & $0.9855\dots$ \\\hline

37 \textbf{r2}  & \verb"tnnnp*" & $3645/104168$  & $0.9852\dots$ \\\hline

38 \textbf{r2}  & \verb"tnnnt*" & $225/4826$  & $0.9856\dots$ \\\hline

\end{tabular}
\end{minipage}
\end{table}

Consequently, $\arg \max_{i \in [38]} \{f_i\}  = 1$ and $\mathcal{I}$ consisting of only $1$-chains is indeed our worst case.

\section{Degeneration of Algorithm}\label{DD_section}

As a simple degeneration, we illustrate why the algorithm yields a worse upper bound without Lemma~\ref{simplify_lem} or the key observation made in \S{\ref{CCR}}.

Recall in the case study of \S{\ref{CCR}}, if we do not apply Lemma~\ref{simplify_lem}, then Case~\ref{case1}.(\romannumeral1) and Case~\ref{case2}.(\romannumeral1) are possible, which give branch number $b = 9$ for the positive overlapping $2$-chain. Suppose the instance returned by \textsf{BR} contains only $\nu$ such $2$-chains, then in the worst case, we have $c^n = 9^{\nu} = (\frac{4}{3})^{n - 5\nu} \cdot \lambda^{-\nu} \ge 1.328^n$ (see type-$3$ in Table~\ref{3sat_numerical} for $\lambda = 81/331$).

Similarly, without the amortized analysis in \S{\ref{CCR}}, a two-negative overlapping $2$-chain would have branch number $9$, which gives $c^n = 9^{\nu} = (\frac{4}{3})^{n - 4\nu} \cdot \lambda^{-\nu} \ge 1.328^n$ in the worst case (see type-$4$ in Table~\ref{3sat_numerical} for $\lambda = 15/46$).

Therefore our optimizations are necessary for proving Theorem~\ref{main2}.

\end{document}